\documentclass[11pt]{article}
\usepackage[utf8]{inputenc}
\usepackage{fullpage}
\usepackage{amssymb,amsmath,amsthm}
\usepackage{epsfig}
\usepackage{url}
\usepackage{mathtools}
\usepackage{breqn}
\usepackage{color}
\usepackage[nocomma]{optidef}
\usepackage[ruled,linesnumbered]{algorithm2e}
\newcommand{\h}{\hspace*{0.2in}}

\newtheorem{thm}{Theorem}
\newtheorem{cor}[thm]{Corollary}
\newtheorem{lem}[thm]{Lemma}

\theoremstyle{definition}

\newtheorem*{example*}{Example}

\begin{document}
\title{Characterisation of Super-stable Matchings}
\author{Changyong Hu\thanks{Department of Electrical and Computer Engineering, University of Texas at Austin, Austin, Texas 78705, USA. E-mail: \texttt{colinhu9@utexas.edu.}} \and Vijay K. Garg\thanks{Department of Electrical and Computer Engineering, University of Texas at Austin, Austin, Texas 78705, USA. E-mail: \texttt{garg@ece.utexas.edu}.}
}
\maketitle

\begin{abstract}
An instance of the super-stable matching problem with incomplete lists and ties is an undirected bipartite graph $G = (A \cup B, E)$, with an adjacency list being a linearly ordered list of ties. Ties are subsets of vertices equally good for a given vertex. An edge $(x,y) \in E \backslash M$ is a blocking edge for a matching $M$ if by getting matched to each other neither of the vertices $x$ and $y$ would become worse off. Thus, there is no disadvantage if the two vertices would like to match up. A matching $M$ is super-stable if there is no blocking edge with respect to $M$. It has previously been shown that super-stable matchings form a distributive lattice \cite{spieker1995set, manlove2002structure} and the number of super-stable matchings can be exponential in the number of vertices. We give two compact representations of size $O(m)$ that can be used to construct all super-stable matchings, where $m$ denotes the number of edges in the graph. The construction of the second representation takes $O(mn)$ time, where $n$ denotes the number of vertices in the graph, and gives an explicit rotation poset similar to the rotation poset in the classical stable marriage problem. We also give a polyhedral characterisation of the set of all super-stable matchings and prove that the super-stable matching polytope is integral, thus solving an open problem stated in the book by Gusfield and Irving \cite{DBLP:books/daglib/0066875}.
\end{abstract}

\section{Introduction}
An instance of the super-stable matching problem with incomplete lists and ties is an undirected bipartite graph $G = (A \cup B, E)$, with an adjacency list being a linearly ordered list of ties. Ties are disjoint and may contain just one vertex. If vertices $b_1$ and $b_2$ are neighbors of vertex $a$ in the graph $G$, then either $(1)$ $a$ strictly prefers $b_1$ to $b_2$, which we denote as $b_1 \succ_a b_2$; or $(2)$ $a$ is indifferent between $b_1$ and $b_2$, which means $b_1$ and $b_2$ are in a tie in $a$'s adjacency list, and denote as $b_1 =_a b_2$; or $(3)$ $a$ strictly prefers $b_2$ to $b_1$. We say $a$ weakly prefers $b_1$ to $b_2$ if either $a$ strictly prefers $b_1$ to $b_2$ or $a$ is indifferent between $b_1$ and $b_2$, which we denote as $b_1 \succeq_a b_2$. A matching $M$ is a set of disjoint edges in the graph $G$. Let $e=(u,v)$ be an edge contained in the matching $M$. Then, we say that vertices $u$ and $v$ are matched in $M$ and write $u = M(v)$ to denote that $u$ is matched to $v$ in $M$. An edge $(x,y) \in E \backslash M$ is a {\em blocking edge} for a matching $M$ if by getting matched to each other neither of the vertices $x$ and $y$ would become worse off, i.e. $x$ is either unmatched or $x$ weakly prefers $y$ to $M(x)$, and $y$ is either unmatched or $y$ weakly prefers $x$ to $M(y)$. We abuse the notation $y \succeq_x M(x)$ for the case that $x$ is unmatched in $M$. A matching is {\em super-stable} if there is no blocking edge with respect to it. 

Super-stable matchings were first investigated by Irving \cite{irving1994stable}, who gave three classes of stable matchings in the case of preference lists with ties, depending on the way of defining a {\em blocking edge} for a matching $M$. In the weakly stable matching problem an edge $e = (x,y)$ is blocking if by getting matched to each other, both $x$ and $y$ would become better off. In the strongly stable matching problem, an edge $e = (x,y)$ is blocking if one of $x$ and $y$ becomes better off and the other would not be worse off.

In this paper we study the problem of characterising the set of all super-stable matchings. The problem was stated in the book by Gusfield and Irving \cite{DBLP:books/daglib/0066875} as one of the 12 open problems. The structure of the set of all stable matchings in the stable marriage problem without ties is well understood in Gusfield and Irving's book \cite{DBLP:books/daglib/0066875}. Recently, Kunysz et al. \cite{kunysz2018algorithm} gave compact representations for the set of all strongly stable matchings and showed that the construction can be done in $O(mn)$ time, where $n$ and $m$ denote the number of vertices and edges in the graph. Scott \cite{scott2005study} investigated the structure of all super-stable matchings by defining an object that he called meta-rotation, which corresponds to one collection of rotations in some arbitrary tie-breaking instance of the original instance and the time complexity of the construction is $O(m^2)$.

We give two compact representations of the set of all super-stable matchings that can be constructed in, respectively, $O(nm^2)$ and $O(mn)$ time.

The first representation of the set of all super-stable matchings consists of $O(m)$ matchings, each of which is a man-optimal stable matching among all super-stable matchings that contains a given edge. We show that computing such matching for each edge can be reduced to computing a man-optimal super-stable matching in a reduced graph by deleting an appropriate subset of edges in graph $G$. The algorithm is described in Section \ref{sec:irreducible}.

Our second representation explicitly constructs rotations, which are differences between consecutive super-stable matchings in a maximal sequence of super-stable matchings starting with a man-optimal super-stable matching and ending with a woman-optimal super-stable matching. Unlike Scott's \cite{scott2005study} meta-rotation, our rotation is the symmetric difference of two super-stable matchings, which could be a cycle or multiple cycles.

Our construction takes $O(mn)$ time, while Scott's \cite{scott2005study} algorithm takes $O(m^2)$ time. We also show how to efficiently construct a partial order among rotations. This poset can be used to solve other problems connected to super-stable matchings such as the enumeration of all super-stable matchings and the maximum weight super-stable matching problem. Fleiner et al. \cite{fleiner2007efficient} solve the weight super-stable matching by reducing it to the 2-SAT problem and the time complexity is $O(mn\log(W))$, where $W$ is the maximum weight among all edges in $G$. By using the rotation poset constructed in this paper, the weighted problem can also be solved in $O(mn\log(W))$ time.

In this paper we also give a polyhedral characterisation for the set of all super-stable matchings and prove that the super-stable matching polytope is integral. This result implies that the maximum weight super-stable matching problem can be solved in polynomial time. Though the complexity of solving LP is usually higher than combinatorial methods, like in \cite{fleiner2007efficient}, this gives an alternative direction to solve the weighted super-stable matching problem. Previously, it has been shown that the stable matching polytope and the strongly stable matching polytope are integral \cite{vate1989linear, rothblum1992characterization, kunysz2018algorithm}, we complete all three cases by proving that the super-stable matching polytope is integral as well.

We also proved a property called self-duality for the super-stable matching polytope, which also holds for the classical stable matching polytope \cite{teo1998geometry} and the strongly stable matching polytope \cite{kunysz2018algorithm}. 

\subsection{Related Works}
Irving \cite{irving1994stable} gave an $O(m)$ algorithm to find a super-stable matching if it exists. Spieker \cite{spieker1995set} showed that super-stable matchings form a distributive lattice. Further properties of super-stable matchings were proved by Manlove in \cite{manlove2002structure}. Scott \cite{scott2005study} introduced the concept called {\em meta-rotation poset} for super-stable matchings and showed the one-to-one correspondence between super-stable matchings and closed subsets of the poset. 

Irving \cite{irving1994stable} and Manlove \cite{manlove2002structure} gave an $O(m^2)$ algorithm to find a strongly stable matching if it exists. Kavitha et al. \cite{kavitha2007strongly} gave an $O(nm)$ algorithm for the strongly stable matching problem. Manlove \cite{manlove2002structure} showed that strongly stable matchings form a distributive lattice. Kunysz et al. \cite{kunysz2016characterisation} gave a characterisation of all strongly stable matchings and later Kunysz \cite{kunysz2018algorithm} gave a polyhedral description for the set of all strongly stable matchings and proved that the strongly stable matching polytope is integral.

For weakly stable matchings, it is not true that all weakly stable matchings of a given instance always have the same size. Weakly stable matching can be easily found by running the deferred-acceptance algorithm while breaking ties in an arbitrary manner. The problem of computing a maximum-size weakly stable matching is NP-hard, which has been proved by Iwama et al. \cite{iwama1999stable}. Thus finding good approximations of the problem becomes very interesting. For the version when ties are allowed on both sides, the currently best approximation factor is $3/2$ \cite{mcdermid20093, paluch2014faster, kiraly2013linear}. For the case when ties only occur on one side, there are a sequence of works pushing the approximation factor lower. Iwama et al. \cite{iwama201425} gave an $25/17$ approximation algorithm. Huang and Kavitha \cite{huang2014improved} improved it to $22/15$. Later Radnai \cite{radnai2014approximation} improved the approximation factor to $41/28$, then Dean et al. \cite{dean2015factor} pushed the approximation factor to $19/13$. Most recent result by Lam and Plaxton \cite{lam20191+} gave the currently best approximation factor of $1+1/e$.

\section{Preliminaries}
In this section we give some definitions and theorems that are useful in the following sections. 

\begin{thm} \cite{irving1994stable, manlove2002structure}
There is an $O(m)$ algorithm to determine a man-optimal super-stable matching of the given instance or report that no such matching exists.
\end{thm}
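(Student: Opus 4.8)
The plan is to adapt the Gale–Shapley deferred-acceptance paradigm to the setting of ties under the super-stability criterion. I would run a proposal algorithm in which each man $a \in A$ proposes, in order of his preference list, to an entire tie at a time rather than to a single woman; that is, if the next group on $a$'s list is a tie $\{b_1,\dots,b_k\}$, then $a$ simultaneously proposes to all of $b_1,\dots,b_k$. Each woman $b$ holds the best proposal (or proposals) she has received so far and rejects the rest, but — and this is the crucial twist for super-stability — whenever $b$ receives two or more proposals that are tied for best in her preference list, she must reject \emph{all} of them, since keeping any one of them would leave a blocking edge with the others. A man whose proposal is rejected deletes that woman (or the whole tie, as appropriate) from his list and proceeds to his next choice; if a man's list becomes empty, we report that no super-stable matching exists.

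The key steps, in order, would be: (i) establish the invariant that if $b$ has ever rejected $a$, then in no super-stable matching is $a$ matched to a woman whom he weakly prefers to $b$ — this is the analogue of the standard ``no man is rejected by a woman he could be matched to in a stable matching'' lemma, proved by induction on the sequence of rejections; (ii) show that when the algorithm terminates with every man either matched or having an empty list, the resulting assignment $M$ (each man matched to the unique woman still holding his proposal) is in fact a matching, i.e.\ no woman holds proposals from two distinct men — this follows because a woman only ever holds multiple proposals when they are tied, and in that case she immediately rejects them all, so at termination she holds at most one; (iii) verify that $M$ is super-stable, i.e.\ has no blocking edge: if $(a,b) \notin M$ with $b \succeq_a M(a)$, then $a$ must have proposed to $b$ and been rejected, so by the invariant $b$ strictly prefers $M(b)$ to $a$, hence $(a,b)$ does not block; (iv) conclude man-optimality directly from invariant (i); and (v) account for the $O(m)$ running time by charging each proposal-deletion to an edge and using suitable doubly linked structures so that each woman's bookkeeping (tracking her current best tie and detecting a tie-collision) is $O(1)$ amortised per incoming proposal.

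The main obstacle I expect is step (iii) together with the correctness of the ``reject the whole tie'' rule: one must be careful that rejecting all tied proposals does not itself destroy the possibility of a super-stable matching when one exists, which is exactly what invariant (i) must be strong enough to guarantee — rejecting $a$ in favour of a strictly-better proposal, or rejecting a set of mutually tied proposals, must each be shown to be ``safe'' in the sense that no super-stable matching is lost. The delicate case is a chain of rejections in which a man forced out of a tie cascades into displacing others; making the induction in (i) go through cleanly, so that it covers simultaneous rejection of an entire tie, is the heart of the argument. Since this theorem is quoted from \cite{irving1994stable, manlove2002structure}, I would present the above as a sketch and refer to those papers for the full details, but the structure above is what a self-contained proof would follow.
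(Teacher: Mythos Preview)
The paper does not prove this statement at all; it is quoted as a known preliminary result from \cite{irving1994stable, manlove2002structure} with no argument given. Your proposal, which explicitly anticipates this and offers a sketch with a pointer to those references, is therefore entirely appropriate, and the outline you give---tie-by-tie proposals, deletion of strictly dominated suitors, and the ``reject the whole tied group'' rule---is indeed the shape of Irving's SUPER algorithm.

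Two remarks on the internals of your sketch, should you want to tighten it. First, invariant (i) is overstated: if $a$ proposes to a tie $\{b,b'\}$ and $b$ rejects $a$ while $b'$ continues to hold him, then $a$ may end up matched to $b'$, whom he weakly prefers to $b$. The invariant that actually carries the induction is the pair-level one---a deleted pair $(a,b)$ belongs to no super-stable matching---and man-optimality follows from that together with the proposal order. Second, step (iii) does not follow from your (i): that invariant concerns $a$'s partners in super-stable matchings, not $b$'s partner in the algorithm's output $M$, so it cannot by itself yield $M(b)\succ_b a$. In fact the output can fail to be super-stable (e.g.\ a third man tied with $a$ on $b$'s list proposes only after the tie-collision has already removed $a$), and Irving's algorithm handles this with an explicit terminal check: if the engagement relation is not a super-stable matching, report failure. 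The pair-level invariant is exactly what certifies that such failure is genuine. These are the places where the cited papers do the real work.
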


\begin{thm}\label{thm: super-stable-vertex} \cite{manlove2002structure}
In a given instance of the super-stable matching problem, the same set of vertices are matched in all super-stable matchings.
\end{thm}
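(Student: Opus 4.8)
The plan is to argue by contradiction using the symmetric difference of two super-stable matchings, via an alternating‑path propagation argument in the spirit of the ``rural hospitals'' theorem for classical stable matchings, but carried out with the weak‑preference relation $\succeq$ so that super‑stability can be invoked.

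Suppose $M$ and $M'$ are super-stable matchings and some vertex is matched in exactly one of them. Since the notion of a blocking edge, hence of super-stability, is symmetric in $A$ and $B$, and since we may interchange the roles of $M$ and $M'$, we may assume this vertex is $a_0 \in A$, matched in $M$ and unmatched in $M'$. Consider the connected component $P$ of $a_0$ in the graph $M \triangle M'$. Every vertex has degree at most $2$ there, so $P$ is a simple path or a cycle; because $a_0$ lies on its $M$-edge but on no $M'$-edge, $P$ is a path, and as $G$ is finite it is a finite path $v_0 = a_0, v_1, v_2, \dots, v_\ell$ whose edges alternate between $M$ and $M'$, with $v_0 v_1 \in M$. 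Thus $v_i \in A$ for even $i$ and $v_i \in B$ for odd $i$; the edge $v_i v_{i+1}$ lies in $M$ when $i$ is even and in $M'$ when $i$ is odd; and for an internal vertex $v_i$ with $1 \le i \le \ell-1$ its two partners are $v_{i-1}$ and $v_{i+1}$, one in $M$ and one in $M'$.

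The heart of the proof is the following claim, shown by induction on $i$: every internal vertex $v_i$ ($1 \le i \le \ell-1$) strictly prefers $v_{i+1}$ to $v_{i-1}$. Base case $i = 1$: the vertex $v_1 = M(a_0)$ must be matched in $M'$, since otherwise the non-matching edge $a_0 v_1$ would block $M'$ (both endpoints being unmatched there), contradicting super-stability; and then, because $a_0$ is unmatched in $M'$, non-blocking of $M'$ by $a_0 v_1$ forces $v_1$ to strictly prefer $M'(v_1) = v_2$ to $a_0 = v_0$. Inductive step: assume $v_i$ strictly prefers $v_{i+1}$ to $v_{i-1}$ and $v_{i+1}$ is internal. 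Let $N \in \{M, M'\}$ be the matching \emph{not} containing the edge $v_i v_{i+1}$; then the $N$-partner of $v_i$ is $v_{i-1}$, and non-blocking of $N$ by the edge $v_i v_{i+1}$ requires that $v_{i+1} \succeq_{v_i} v_{i-1}$ fails or $v_i \succeq_{v_{i+1}} N(v_{i+1})$ fails. The first holds by hypothesis (indeed strictly), so the second fails, i.e.\ $v_{i+1}$ strictly prefers $N(v_{i+1}) = v_{i+2}$ to $v_i$, which is precisely the claim for $v_{i+1}$. Finally, to obtain the contradiction, examine the last edge $v_{\ell-1} v_\ell$: its endpoint $v_\ell$ is unmatched in whichever of $M, M'$ does not contain this edge, so non-blocking of that matching by $v_{\ell-1} v_\ell$ — with the $v_\ell$-side vacuously satisfied by the convention $y \succeq_x M(x)$ for unmatched $x$ — forces $v_{\ell-1}$ to strictly prefer $v_{\ell-2}$ to $v_\ell$, contradicting the claim applied to the internal vertex $v_{\ell-1}$.

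Consequently no vertex can be matched in exactly one of $M$ and $M'$, so $M$ and $M'$ cover the same set of vertices; since they were arbitrary super-stable matchings, the theorem follows. The step I expect to be the main obstacle in writing the argument cleanly is the bookkeeping: tracking the parity of $i$ together with which of $M, M'$ contains each path-edge, and applying the super-stability of the correct matching to the correct edge at each point — in particular, the base step and the final step each exploit a vertex that is unmatched in one matching (via the convention $y \succeq_x M(x)$), whereas the inductive step uses only internal vertices, and conflating these is the easiest mistake to make.
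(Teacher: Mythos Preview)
The paper does not supply its own proof of this statement; it is quoted in the Preliminaries section as a known result from Manlove~\cite{manlove2002structure}, so there is nothing to compare your argument against directly.

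That said, your proof is correct and is essentially the standard one. The alternating-path analysis in $M\triangle M'$, with strict preference propagating along the path because super-stability forbids a blocking edge whenever \emph{either} side merely weakly prefers, is exactly how the result is established in the literature. Your base case correctly rules out a length-one path, your inductive step correctly identifies which matching $N$ to apply super-stability to, and the terminal contradiction at $v_{\ell-1}$ is clean. The bookkeeping you flag as the potential obstacle is in fact handled properly in what you wrote; the only cosmetic point is that the case $\ell=1$ is implicitly dispatched inside your base case (you show $v_1$ must be matched in $M'$, hence $\ell\ge 2$), and you might make that explicit in a final write-up.
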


\begin{lem} \label{lem:super-stable-diff} \cite{manlove2002structure}
Let $M,N$ be two super-stable matchings in a given super-stable matching instance. Suppose that, for any agent $p$, $(p, q) \in M$ and $(p, q') \in N$, where $p$ is indifferent between $q$ and $q'$, then $q = q'$. 
\end{lem}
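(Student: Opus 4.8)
The plan is to argue by contradiction, assuming $q \neq q'$, and to produce a blocking edge for one of the two matchings. Fix an agent $p$ with $M(p) = q$, $N(p) = q'$, $q =_p q'$, but $q \neq q'$. Since $q$ and $q'$ are both matched in $M$ and in $N$ by Theorem \ref{thm: super-stable-vertex}, consider where $q$ is matched in $N$ and where $q'$ is matched in $M$. The natural first step is to examine the edge $(p, q)$ with respect to $N$: we have $p$ matched in $N$ to $q'$, and $q \succeq_p q'$ since $p$ is indifferent between them; so for $(p,q)$ not to block $N$, it must be that $q$ strictly prefers $N(q)$ to $p$, i.e. $N(q) \succ_q p$. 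Symmetrically, examining $(p, q')$ with respect to $M$: $p$ is matched in $M$ to $q$, and $q' \succeq_p q$, so for $(p, q')$ not to block $M$ we need $M(q') \succ_{q'} p$.

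The core of the argument is then to chase the alternating structure of $M \oplus N$ starting from $p$. Let $q_0 = q$ and follow the alternating path/cycle: $p$ is matched to $q_0$ in $M$; $q_0$ is matched to some $p_1 = N(q_0)$ in $N$ with $p_1 \succ_{q_0} p$ (strict, from above); $p_1$ is matched to some $q_1 = M(p_1)$ in $M$. I would now show $p_1$ is indifferent between $q_0$ and $q_1$, or strictly prefers one, and push the inequality along: because $(p_1, q_0) \notin M$ and $q_0 \succeq_{p_1}$ (we need to check the direction), either the edge $(p_1, q_0)$ blocks $M$ — contradiction — or $p_1$ strictly prefers $M(p_1) = q_1$ to $q_0$. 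Iterating, one obtains a strictly decreasing (in each vertex's own preference) infinite alternating walk on a finite graph, which is impossible; alternatively, if the component of $p$ in $M \oplus N$ is a cycle, following it all the way around returns to $p$ and yields $p \succ_p q$ or a strict cycle of preferences summing to a contradiction. The cleanest phrasing is likely: along the alternating cycle through $p$, every agent strictly prefers its $M$-partner to its $N$-partner (or vice versa) except possibly at $p$ where it is a tie, and traversing the cycle forces a strict inequality to close up on itself, which is absurd.

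The main obstacle I anticipate is getting the blocking-edge bookkeeping exactly right at each step — that is, verifying that at vertex $p_i$ on the alternating walk the relevant non-matching edge has the correct weak-preference orientation so that super-stability of the appropriate matching ($M$ or $N$, depending on parity) can be invoked to upgrade a weak preference to a strict one. One has to be careful that the edge we test is genuinely not in the matching whose stability we use, and that the indifference hypothesis at $p$ is used exactly once, to seed the chain. Once the orientation lemma at a generic step is established, the finiteness/cycle-closing contradiction is routine. I would also handle the degenerate case where the component is a path (one of its endpoints being $p$ or an unmatched vertex) separately, though Theorem \ref{thm: super-stable-vertex} rules out the unmatched-endpoint case and simplifies this considerably.
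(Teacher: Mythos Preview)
The paper does not supply its own proof of this lemma; it is quoted from Manlove \cite{manlove2002structure} without argument, so there is nothing in the paper to compare against directly. Your plan is correct and is essentially the standard argument: seed the alternating cycle of $M \oplus N$ through $p$ with the tie $q =_p q'$, then use super-stability of $N$ and $M$ alternately to propagate \emph{strict} preferences around the cycle (each $q_i$ strictly prefers its $N$-partner, each $p_i$ with $i\ge 1$ strictly prefers its $M$-partner). The only place your write-up is loose is the closing step. Rather than a vague ``strict inequality closes up on itself,'' the concrete contradiction is this: writing the cycle as $p_0=p,\,q_0=q,\,p_1,\,q_1,\ldots,p_{k-1},\,q_{k-1}=q'$ with $(p_i,q_i)\in M$ and $(p_{i+1},q_i)\in N$, the last propagation step gives $p \succ_{q'} p_{k-1}=M(q')$; combined with $q' =_p q = M(p)$, the edge $(p,q')$ blocks $M$. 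Your worry about the ``degenerate path case'' is unnecessary, since Theorem~\ref{thm: super-stable-vertex} already forces the component to be a cycle.

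As a side remark, the paper's source contains, inside a commented-out block immediately following the lemma, a closely related statement (that on every alternating cycle of $M\oplus N$ the preferences are strictly monotone in one consistent direction) whose proof sketch is exactly the chasing argument you describe.
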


We recall some standard notations and definitions from the theory of matchings under preferences.
For a given edge $(m,w)$, any matching containing $(m,w)$ is called an $(m,w)$-matching. Let us denote the set of all super-stable matchings of $G$ by $\mathcal{M}_G$. Let $\mathcal{M}_G(m,w)$ be the set of all super-stable $(m,w)$-matchings in $G$.

For two super-stable matchings $M$ and $N$, we say that $M$ {\em dominates} $N$ and write $M \succeq N$ if each man $m$ weakly prefers $M(m)$ to $N(m)$. If $M$ dominates $N$ and there exists a man $m$ who prefers $M(m)$ to $N(m)$, then we say $M$ {\em strictly dominates} $N$, write $M \succ N$ and we call $N$ a {\em successor} of $M$. Note that by Lemma \ref{lem:super-stable-diff}, $M \succeq N$ implies $M \succ N$, assuming $M$ is not equal to $N$.

\section{Irreducible Super-stable Matchings} \label{sec:irreducible}
In this section, we give our first representation via irreducible matchings. Birkhoff's representation theorem \cite{birkhoff1937rings} for distributive lattices states that the elements of any finite distributive lattice can be represented as finite sets in such a way that the lattice operations correspond to unions and intersections of sets. The theorem gives a one-to-one correspondence between distributive lattices and partial orders. Our goal is to find the partial order that represents the set of all super-stable matchings.

Distributive lattice is closely related to rings of sets, which is a family of sets that is closed under set unions and set intersections. If the sets in a ring of sets are ordered by set inclusion, they form a distributive lattice. Theory regarding rings of sets and its application to representations of the set of stable matchings in the classical stable marriage problem is well studied by Irving and Gusfield \cite{DBLP:books/daglib/0066875}. Below we give a brief summary of this theory that serves as a preliminary for our algorithm.

Given a finite set $B$, the {\em base} set, a family $\mathcal{F} = \{F_0, F_1, \cdots, F_k\}$ of subsets of $B$ is called a ring of sets over $B$ if $\mathcal{F}$ is closed under set union and intersection. A ring of sets contains a unique minimal element and a unique maximal element. 

For any element $a \in B$, we denote $\mathcal{F}(a)$ the set of all elements of $\mathcal{F}$ that contains $a$. It is obvious that $\mathcal{F}(a)$ is also a ring of sets over $B$. We define $F(a)$ to be the unique minimal element of $\mathcal{F}(a)$. An element $F \in \mathcal{F}$ that is $F(a)$ for some $a \in B$ is called {\em irreducible}. We denote $I(\mathcal{F})$ the set of all irreducible elements of $\mathcal{F}$. We view $(I(\mathcal{F}), \leq)$ as a partial order under the relation $\leq$ of set containment. We give the Birkhoff's representation theorem in the language of rings of sets below.

\begin{thm} \cite{DBLP:books/daglib/0066875}
i) There is a one-to-one correspondence between the closed subsets of $I(\mathcal{F})$ and the elements of $\mathcal{F}$.\\
ii) If $S$ and $S'$ are closed subsets of $I(\mathcal{F})$ that generate $F = \bigcup S$ and $F' = \bigcup S'$ respectively, then $F \subseteq F'$ if and only if $S \subseteq S'$.
\end{thm}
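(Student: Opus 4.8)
The plan is to exhibit an explicit order-isomorphism between $\mathcal{F}$, ordered by inclusion, and the poset of closed (i.e., downward-closed) subsets of $(I(\mathcal{F}), \subseteq)$, also ordered by inclusion. Define $\Phi(F) = \{ G \in I(\mathcal{F}) : G \subseteq F \}$ for $F \in \mathcal{F}$, and $\Psi(S) = \bigcup_{G \in S} G$ for $S \subseteq I(\mathcal{F})$. First I would check these are well defined: $\Phi(F)$ is downward closed because $G' \subseteq G \subseteq F$ forces $G' \subseteq F$; and $\Psi(S) \in \mathcal{F}$ because $B$, hence $I(\mathcal{F})$, hence $S$, is finite and $\mathcal{F}$ is closed under finite unions (adopting the usual convention that an empty union denotes the minimum $\bigcap \mathcal{F}$ of $\mathcal{F}$, or equivalently normalising so that $\bigcap \mathcal{F} = \emptyset$, in which case the empty closed set corresponds to the minimum of the lattice).

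Next I would show $\Psi \circ \Phi$ is the identity on $\mathcal{F}$, i.e. every $F \in \mathcal{F}$ is the union of the irreducible elements it contains. The inclusion $\Psi(\Phi(F)) \subseteq F$ is immediate. For the reverse, fix $a \in F$; then $F \in \mathcal{F}(a)$, so minimality of $F(a)$ in the intersection-closed family $\mathcal{F}(a)$ gives $F(a) \subseteq F$, while $F(a)$ is irreducible and contains $a$, so $a \in \Psi(\Phi(F))$; hence $F \subseteq \Psi(\Phi(F))$.

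Then I would show $\Phi \circ \Psi$ is the identity on closed subsets: for downward-closed $S$ I must check $\Phi(\bigcup S) = S$. Here $S \subseteq \Phi(\bigcup S)$ is clear since each $G \in S$ satisfies $G \subseteq \bigcup S$. Conversely, let $G \in I(\mathcal{F})$ satisfy $G \subseteq \bigcup S$ and write $G = F(a)$; since $a \in G \subseteq \bigcup S$ there is $H \in S$ with $a \in H$, so $H \in \mathcal{F}(a)$ and minimality of $F(a)$ yields $G = F(a) \subseteq H$, i.e. $G \leq H$ in the poset, and since $S$ is downward closed, $G \in S$. I expect this to be the crux of the argument: it is the one step where the closure hypothesis on $S$ is genuinely used, and it hinges on the minimality property defining $F(a)$; everything else is routine bookkeeping.

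Finally, part (ii) requires no new work. Monotonicity in one direction is $S \subseteq S' \Rightarrow \bigcup S \subseteq \bigcup S'$, and in the other $F \subseteq F' \Rightarrow \Phi(F) \subseteq \Phi(F')$ straight from the definition of $\Phi$; together with the bijection of part (i) this says $\Phi$ and $\Psi$ are mutually inverse order-isomorphisms, which is precisely statement (ii). The only points I would be careful to flag explicitly at the outset are the closure (downward-closed) reading of ``closed subset'' and the normalisation of the minimal element of $\mathcal{F}$.
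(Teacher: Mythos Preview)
The paper does not prove this theorem; it is quoted as background from Gusfield and Irving \cite{DBLP:books/daglib/0066875} with no accompanying argument. So there is no ``paper's own proof'' to compare against.

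Your proposal is correct and is essentially the standard proof of Birkhoff's representation theorem in the ring-of-sets formulation (and hence presumably what one would find in the cited source). The maps $\Phi$ and $\Psi$ are the canonical ones, the verification that $\Psi\circ\Phi = \mathrm{id}$ uses exactly the defining minimality of $F(a)$, and your identification of the step $\Phi(\bigcup S)\subseteq S$ as the one place where downward-closure of $S$ is genuinely needed is spot on. The only cosmetic point is your handling of the empty closed set / minimal element of $\mathcal{F}$: your normalisation remark is adequate, though one could alternatively simply declare $F_0 = \bigcap\mathcal{F}$ to be irreducible (it equals $F(a)$ for any $a\in F_0$) so that $\Phi(F_0)=\{F_0\}$ and no special convention is required.
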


In the context of super-stable matchings, the base set $B$ corresponds to the set of all acceptable pairs $(m,w) \in E$. We define the $P$-set of a super-stable matching $M$ to be the set of all pairs $(m,w)$, where $w$ is either $M(m)$ or a woman whom $m$ weakly prefers to $M(m)$, which corresponds to an element in $\mathcal{F}$. It is obvious that the unique minimal (man-optimal) super-stable matching in $\mathcal{M}_G(m,w)$, if nonempty, is {\em irreducible}.

We describe an $O(|E|)$ algorithm for computing a man-optimal super-stable $(m,w)$-matching in $G$. Algorithm \ref{alg:man-super-stable} essentially constructs a reduced graph $G' \subseteq G$ by removing some edges from $G$ (line \ref{line:remove_start} to line \ref{line:remove_end} in Algorithm \ref{alg:man-super-stable}). After that, the algorithm computes a man-optimal super-stable matching $M'$ in the reduced graph $G'$. By adding back the edge $(m,w)$, the new matching $M \cup (m,w)$ is super-stable in $G$.

\begin{algorithm}
{\bf Input:} the graph $G = (A \cup B, E)$ and preference lists of $G$ and an edge $(m,w) \in E$.\\
{\bf Output:} man-optimal super-stable $(m,w)$-matching or deciding that no such matching exists.\\
$G' \leftarrow G \backslash \{m, w\}$ $//$ remove $m$ and $w$ and all edges that are incident to them \\
{\bf for} $m'$ s.t. $(m', w) \in E$ and $m \preceq_w m'$ {\bf do} \\ \label{line:remove_start}
\h {\bf for} $w'$ s.t. $(m',w') \in E$ and $w \succeq_{m'} w'$ {\bf do}\\
\h\h $G' \leftarrow G' \backslash (m',w')$ \\
\h {\bf end for}\\
{\bf end for}\\
{\bf for} $w'$ s.t. $(m, w') \in E$ and $w \preceq_m w'$ {\bf do}\\
\h {\bf for} $m'$ s.t. $(m',w') \in E$ and $m \succeq_{w'} m'$ {\bf do}\\
\h\h $G' \leftarrow G' \backslash (m',w')$\\
\h {\bf end for}\\
{\bf end for}\\ \label{line:remove_end}
compute man-optimal super-stable matching in $G'$.\\
{\bf if} exists man-optimal super-stable matching $M$ in $G'$ and $M \cup (m,w)$ is super-stable in $G$\\
\h {\bf return} $M \cup (m,w)$\\
{\bf else}\\
\h {\bf return} no super-stable $(m,w)$-matching exists.\\
{\bf end if}
\caption{Computing man-optimal super-stable $(m,w)$-matching}
\label{alg:man-super-stable}
\end{algorithm}

\begin{lem}\label{lem:super-stable-necc}
Let $M$ be a super-stable $(m,w)$-matching. Then $M' = M \backslash \{(m,w)\}$ is a super-stable matching in the reduced graph $G'$.
\end{lem}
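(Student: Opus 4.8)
The plan is to show two things: first, that $M'$ is a valid matching in $G'$ (i.e., that none of the edges of $M'$ were deleted in the construction of $G'$), and second, that $M'$ has no blocking edge with respect to $G'$. Since $G' \subseteq G$, any blocking edge for $M'$ in $G'$ is also an edge of $G$, so the super-stability of $M$ in $G$ will do most of the work for the second part.

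For the first part, I would argue by contradiction. The vertices $m$ and $w$ and their incident edges are removed outright; since $(m,w)\in M$ and $M$ is a matching, no other edge of $M$ is incident to $m$ or $w$, so those removals are harmless. The remaining removals delete an edge $(m',w')$ precisely when (i) $(m',w)\in E$, $m\preceq_w m'$, and $w\succeq_{m'} w'$; or (ii) $(m,w')\in E$, $w\preceq_m w'$, and $m\succeq_{w'} m'$. Suppose such a deleted $(m',w')$ lies in $M'\subseteq M$. In case (i): $m'$ is matched in $M$ to $w'$ with $w\succeq_{m'}w'$, and $w$ is matched in $M$ to $m$ with $m\preceq_w m'$; hence $(m',w)$ is a blocking edge for $M$ — both endpoints weakly prefer each other to their $M$-partners — contradicting super-stability of $M$ (note $(m',w)\notin M$ since $w=M(m)$ and $m'\neq m$, the latter because $(m',w')\in M'$ avoids $m$). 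Case (ii) is symmetric, with $(m,w')$ blocking $M$. So $M'$ survives in $G'$.

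For the second part, suppose $(x,y)$ is a blocking edge for $M'$ in $G'$. Then $(x,y)\in E$, $(x,y)\notin M'$, and each of $x,y$ is either unmatched in $M'$ or weakly prefers the other to its $M'$-partner. I want to conclude $(x,y)$ blocks $M$ in $G$. The only subtlety is that the match status of a vertex can change between $M$ and $M'$: the sole difference is the pair $(m,w)$, so the only vertices affected are $m$ and $w$ themselves — but these are not vertices of $G'$, hence $x\neq m$ and $y\neq w$ (and $x\neq w$, $y\neq m$). Therefore $M'(x)=M(x)$ and $M'(y)=M(y)$ whenever these are defined, and a vertex is matched in $M'$ iff it is matched in $M$ (among vertices of $G'$). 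Consequently the blocking conditions for $(x,y)$ transfer verbatim to $M$, and $(x,y)\notin M$ as well, so $(x,y)$ blocks $M$ — a contradiction. Hence $M'$ is super-stable in $G'$.

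I expect the main obstacle to be bookkeeping in the first part: one must carefully check that every deleted edge, were it in $M$, produces a genuine blocking edge for $M$, and in particular verify the non-membership conditions (that the putative blocking edge is not already in $M$) so that Lemma~\ref{lem:super-stable-diff} and the definition of blocking edge apply cleanly. The second part is essentially immediate once one observes that $G'$ excludes $m$ and $w$, which are the only vertices whose matched/unmatched status differs between $M$ and $M'$.
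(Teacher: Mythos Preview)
Your proposal is correct and follows essentially the same approach as the paper: first argue by contradiction that no edge of $M'$ was deleted (since a deleted $(m',w')\in M$ would make $(m',w)$ or $(m,w')$ a blocking pair for $M$), and then observe that any edge blocking $M'$ in $G'$ would also block $M$ in $G$. Your treatment is in fact more careful than the paper's, which dispatches the second part in one sentence without spelling out, as you do, that $m$ and $w$ are absent from $G'$ so the matched status and partners of every remaining vertex are identical in $M$ and $M'$.
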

\begin{proof}
We need to prove $M' \subseteq G'$ or equivalently none of edges removed from $G$ is matched in $M'$. Suppose not, an edge $(m', w')$ was removed from $G$ and is matched in $M'$. Note that $m' \neq m$ and $w' \neq w$. Hence, it follows that there is an edge $(m, w')$ or $(m', w)$ which caused the removal of $(m', w')$. W.l.o.g, let's assume it is $(m,w')$ which caused the removal of $(m',w')$. Then we have $w \preceq_m w'$ and $m \succeq_{w'} m'$. Obviously, $(m, w')$ is a blocking pair, which leads to a contradiction of $M$ being super-stable.

To prove super-stability of $M'$ is easy. If there were an edge $e$ blocking $M'$, it would also block $M$.
\end{proof}

\begin{lem}\label{lem:super-stable-suff}
Let $M'$ be some super-stable matching in the reduced graph $G'$ if exists. If $M' \cup (m,w)$ is a super-stable matching in $G$, then for each super-stable matching $N'$ in $G'$, $N' \cup (m,w)$ is a super-stable matching in $G$. If $G'$ does not have any super-stable matching, then there is no super-stable $(m,w)$-matching. 
\end{lem}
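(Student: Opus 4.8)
The plan is to establish a structural transfer between super-stable matchings in $G'$ and super-stable $(m,w)$-matchings in $G$, using the fact (Theorem \ref{thm: super-stable-vertex}) that all super-stable matchings of a given instance match exactly the same set of vertices. First I would observe that $G'$ is itself a super-stable matching instance (it is obtained from $G$ by deleting vertices and edges), so Theorem \ref{thm: super-stable-vertex} applies to $G'$: every super-stable matching of $G'$ saturates the same vertex set $V'$. Combined with Lemma \ref{lem:super-stable-necc}, which says that if $M$ is any super-stable $(m,w)$-matching of $G$ then $M \setminus \{(m,w)\}$ is super-stable in $G'$, this will let me move back and forth. The two claims in the statement are handled in sequence.

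For the first claim, I would argue by contradiction. Suppose $M' \cup \{(m,w)\}$ is super-stable in $G$ but there is some super-stable $N'$ of $G'$ with $N' \cup \{(m,w)\}$ not super-stable in $G$. Then there is an edge $e = (x,y) \in E$ blocking $N' \cup \{(m,w)\}$. The key point is that any such blocking edge cannot involve $m$ or $w$: if it did, say $e = (m, y)$, then since $m$ is matched to $w$ in $N' \cup \{(m,w)\}$, blocking would require $y \succeq_m w$, and since $y \ne w$ the edge $(m,y)$ is one of the edges that triggered the edge-removal loop in Algorithm \ref{alg:man-super-stable}; I would check that this in turn means $(m,y)$ already blocks $M' \cup \{(m,w)\}$ (either directly, or because $y$'s partner in $N'$ — who must equal $y$'s partner in $M'$ by the common-vertex-set property applied carefully, or at least be weakly worse for $y$ — makes $(m,y)$ blocking there too), contradicting super-stability of $M' \cup \{(m,w)\}$; the symmetric argument handles $e = (x,w)$. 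Hence $e$ is an edge of $G' $ (both endpoints distinct from $m,w$, and $e$ survived the removal — here I use that $e$ blocking $N' \cup \{(m,w)\}$ with neither endpoint $m$ nor $w$ forces $e$ to not be among the removed edges, since a removed edge $(m',w')$ has $m \succeq_{w'} m'$ and $w \succeq_{m'} w'$, which would make $(m,w')$ or $(m',w)$ — not $e$ itself — the blocking edge, but I should instead directly argue $e$ blocks $N'$ in $G'$). Then $e$ blocks $N'$ in $G'$ unless $e$'s blocking status depends on an endpoint being matched in $G$ but not $G'$; but the vertex sets of $M' \cup\{(m,w)\}$ and $N'\cup\{(m,w)\}$ agree, and $M'$, $N'$ saturate the same vertices, so $e$ blocks $N'$ in $G'$, contradicting $N'$ super-stable. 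This yields the first claim.

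For the second claim — if $G'$ has no super-stable matching then $G$ has no super-stable $(m,w)$-matching — I would simply take the contrapositive of Lemma \ref{lem:super-stable-necc}: if $G$ had a super-stable $(m,w)$-matching $M$, then $M \setminus \{(m,w)\}$ would be a super-stable matching of $G'$, contradicting the assumption. The main obstacle I anticipate is the bookkeeping in the first claim: correctly ruling out blocking edges incident to $m$ or $w$, and cleanly invoking the common-vertex-set theorem to guarantee that the matched/unmatched status of every vertex is identical across $M' \cup \{(m,w)\}$, $N' \cup \{(m,w)\}$, $M'$, and $N'$, so that "blocks in $G$" and "blocks in $G'$" coincide for edges of $G'$. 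The edge-removal rules of Algorithm \ref{alg:man-super-stable} were designed precisely so that these incident-to-$\{m,w\}$ cases collapse, so the argument should go through once the case analysis is laid out carefully.
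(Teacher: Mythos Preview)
Your plan is essentially the paper's approach: use Lemma~\ref{lem:super-stable-necc} for the second claim, and for the first claim use Theorem~\ref{thm: super-stable-vertex} applied to $G'$ together with the edge-deletion rules to show that any potential blocking edge for $N'\cup\{(m,w)\}$ would equally block $M'\cup\{(m,w)\}$. The overall structure is correct.

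There is one concrete slip to fix. In the ``edge incident to $m$'' case you write that $y$'s partner in $N'$ ``must equal $y$'s partner in $M'$ by the common-vertex-set property \ldots\ or at least be weakly worse for $y$.'' Neither of these follows: Theorem~\ref{thm: super-stable-vertex} only says $M'$ and $N'$ saturate the same vertex set, and there is no monotonicity between $M'(y)$ and $N'(y)$ in general. The correct argument (and what the paper actually uses) is that the deletion rule removed every edge $(m'',y)$ with $m\succeq_y m''$, so any edge incident to $y$ that survives in $G'$ is strictly preferred by $y$ to $m$. Hence if $(m,y)$ blocks $N'\cup\{(m,w)\}$, $y$ cannot be matched in $N'$ at all; by Theorem~\ref{thm: super-stable-vertex} on $G'$, $y$ is then also unmatched in $M'$, and $(m,y)$ blocks $M'\cup\{(m,w)\}$, a contradiction. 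The same device handles a removed edge $(m',w')$ with $m'\neq m$, $w'\neq w$: such an edge was removed because of some $(m,w')$ or $(m',w)$ (not necessarily both, contrary to your parenthetical), and the deletion rule again forces the relevant endpoint to be unmatched if $(m',w')$ blocks, reducing to the incident case rather than to ``$e$ blocks $N'$ in $G'$'' (which fails since $e\notin E'$). With this correction your argument goes through and matches the paper.
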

\begin{proof}
Let $M = M' \cup (m,w)$. Since $M'$ is super-stable in $G'$. It follows that only the removed edges in $E\backslash E'$ can potentially block $M$. We have two cases.\\Case 1. Any edge that is incident to $m$ or $w$ cannot block $M$. W.l.o.g, Suppose that for some $w'$ that is incident to $m$, and $(m,w')$ blocks $M$. Then we have $w' \succeq_m w$. By the construction of $G'$, any edge $(m',w')$ such that $m \succeq_{w'} m'$ was removed. Hence $w'$ must be unmatched in $M$. From Theorem \ref{thm: super-stable-vertex}, $w'$ is unmatched in any super-stable matching of $G$. Let us assume there exists some super-stable $(m,w)$-matching $N$. Then $N' = N \backslash (m,w)$ is super-stable in $G'$. Since $w'$ is unmatched in $N$, $(m, w')$ blocks $N$, contradiction. \\Case 2. Any edge $(m', w')$ such that $m' \neq m$ and $w' \neq w$ cannot block $M$. By the construction of the reduced graph $G'$, the removal of $(m',w')$ was caused by some edge $(m, w')$ or $(m', w)$. W.l.o.g, some edge $(m, w')$ caused the removal of $(m', w')$. Hence, if $w'$ is matched in $M$, then $M(w') \succeq_{w'} m'$. $(m',w')$ does not block $M$. In the case that $w'$ is unmatched in $M$, $w'$ is unmatched in any super-stable matching in $G$. Similar to Case 1, if there exists some super-stable $(m,w)$-matching $N$, then $(m,w')$ blocks N, contradiction. By the same argument, if $M$ is super-stable in $G$, for any other super-stable matching $N'$ in $G'$, $M'$ and $N'$ match the same set of vertices. No edges in $E\backslash E'$ can block $N' \cup (m,w)$.
\end{proof}

\begin{thm}
Let $(m,w)$ be an edge in $G$. There is an $O(m)$ algorithm for computing a man-optimal super-stable $(m,w)$-matching or deciding that no super-stable $(m,w)$-matching exists.
\end{thm}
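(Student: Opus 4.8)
The plan is to show that Algorithm~\ref{alg:man-super-stable} is correct and runs in $O(m)$ time. Correctness amounts to three things: (a) the matching returned in the ``if'' branch really is a super-stable $(m,w)$-matching; (b) when the algorithm reaches the ``else'' branch there genuinely is no super-stable $(m,w)$-matching; and (c) the matching returned in the ``if'' branch dominates every super-stable $(m,w)$-matching. Claims (a) and (b) repackage Lemmas~\ref{lem:super-stable-necc} and \ref{lem:super-stable-suff}, while (c) needs a short extra argument.

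Claim (a) is immediate: the ``if'' branch explicitly tests that $M \cup (m,w)$ is super-stable in $G$ before returning it. For (b), suppose for contradiction that some super-stable $(m,w)$-matching $N$ exists in $G$. By Lemma~\ref{lem:super-stable-necc}, $N' := N \setminus \{(m,w)\}$ is a super-stable matching of the reduced graph $G'$; in particular $G'$ has a super-stable matching, so the $O(m)$ algorithm of Irving~\cite{irving1994stable} computes a man-optimal one $M$ and the algorithm does not bail out for lack of a matching in $G'$. Moreover $N' \cup (m,w) = N$ is super-stable in $G$, so applying Lemma~\ref{lem:super-stable-suff} with $N'$ playing the role of $M'$ gives that \emph{every} super-stable matching of $G'$ extends this way; in particular $M \cup (m,w)$ is super-stable in $G$. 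Hence the algorithm takes the ``if'' branch, contradicting the hypothesis that it reached the ``else'' branch. This is exactly the ``all-or-nothing'' phenomenon that makes it sound to test only the man-optimal matching of $G'$ rather than all of them.

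For (c), let $M$ be the man-optimal super-stable matching of $G'$ found by the algorithm, let $N$ be any super-stable $(m,w)$-matching of $G$, and put $N' := N \setminus \{(m,w)\}$, which by Lemma~\ref{lem:super-stable-necc} is super-stable in $G'$. Since $G'$ inherits the preference lists of $G$ unchanged and $M$ is man-optimal among super-stable matchings of $G'$, every man weakly prefers his $M$-partner to his $N'$-partner, and these weak preferences are the same in $G$. In $M \cup (m,w)$ and in $N$ the man $m$ is matched to $w$ in both (so he is indifferent), and every other man keeps the partner he had in $M$, resp.\ $N'$; hence $M \cup (m,w)$ dominates $N$. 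Combined with (a), $M \cup (m,w)$ is the man-optimal super-stable $(m,w)$-matching.

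For the running time: removing $m$, $w$ and their incident edges is $O(m)$. The first double loop, for each neighbour $m'$ of $w$, scans the adjacency list of $m'$, so its total cost is $O\bigl(\deg(w) + \sum_{m' \in \Gamma(w)} \deg(m')\bigr)$, which is $O(m)$ because $\sum_{a \in A} \deg(a) = m$; the second double loop is $O(m)$ symmetrically, once we have built rank arrays in $O(m)$ preprocessing so that comparisons such as $m \preceq_w m'$ and $w \succeq_{m'} w'$ take $O(1)$. Computing a man-optimal super-stable matching of $G'$ is $O(m)$ by Irving's algorithm, and checking super-stability of $M \cup (m,w)$ in $G$ scans each edge once. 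So the whole algorithm is $O(m)$. The only genuinely delicate point—already handled by Lemmas~\ref{lem:super-stable-necc} and \ref{lem:super-stable-suff}—is the all-or-nothing behaviour of the reduction; beyond invoking those lemmas, the remaining work (the man-optimality transfer in (c) and the degree-sum bookkeeping for the time bound) is routine.
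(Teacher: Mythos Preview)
Your proposal is correct and follows the same approach as the paper, namely reducing correctness to Lemmas~\ref{lem:super-stable-necc} and~\ref{lem:super-stable-suff}. In fact your argument is more complete than the paper's: the paper's proof is two sentences that invoke the lemmas to show that the algorithm outputs a super-stable matching when one exists and reports failure otherwise, but it does not spell out the man-optimality transfer (your item (c)) or the $O(m)$ running-time bookkeeping, both of which you supply correctly.
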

\begin{proof}
Lemma \ref{lem:super-stable-suff} makes sure if Algorithm \ref{alg:man-super-stable} outputs a matching $M$, then $M$ is super-stable in $G$. Lemma \ref{lem:super-stable-necc} guarantees that if there exists any super-stable matching in $G$, then Algorithm \ref{alg:man-super-stable} would never miss it.
\end{proof}

\begin{thm}
$(I(\mathcal{M}_G), \leq)$ can be constructed in $O(nm^2)$ time.
\end{thm}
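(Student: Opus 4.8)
The plan is to build the ground set $I(\mathcal{M}_G)$ and the order $\leq$ by a single sweep over the edges, followed by pairwise comparison of the matchings it produces. First, for every edge $e=(a,b)\in E$ I would run Algorithm~\ref{alg:man-super-stable} to obtain the man-optimal super-stable $(a,b)$-matching $M_e$, or a certificate that no super-stable $(a,b)$-matching exists. By the previous theorem each such call runs in $O(m)$ time, so this phase costs $O(m^2)$ and returns a family of at most $m$ super-stable matchings.

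The heart of the argument — and the step I expect to be the main obstacle — is showing that the set of \emph{distinct} matchings produced this way is exactly $I(\mathcal{M}_G)$, neither too large nor too small. One inclusion is the observation already made in the text: whenever $\mathcal{M}_G(a,b)\neq\emptyset$ it has a man-optimal element, which is its unique minimal element under domination and hence irreducible, so every $M_e$ lies in $I(\mathcal{M}_G)$. For the reverse inclusion I would take an arbitrary irreducible element $F=F((a,b))$, that is, the minimal $P$-set containing the pair $(a,b)$, and let $b^\ast$ be the partner of $a$ in the matching underlying $F$, noting $b^\ast\succeq_a b$. Using the distributive-lattice structure of $\mathcal{M}_G$ \cite{spieker1995set, manlove2002structure} together with Lemma~\ref{lem:super-stable-diff} — which, for a fixed man, makes his partner and the set of women he weakly prefers to that partner determine each other across super-stable matchings — one shows that $F$ is also the minimal $P$-set containing $(a,b^\ast)$, so $F=M_{(a,b^\ast)}$ is one of the matchings from the first phase; the corner case in which $a$ is unmatched in every super-stable matching contributes only the man-optimal matching $M_0$, which is itself some $M_e$ (Theorem~\ref{thm: super-stable-vertex}). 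Hence the distinct matchings computed in the first phase are precisely the elements of $I(\mathcal{M}_G)$.

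Finally I would recover the relation $\leq$. On $P$-sets $\leq$ is set inclusion, and $P(M)\subseteq P(N)$ holds exactly when $M$ dominates $N$, i.e. $M(a)\succeq_a N(a)$ for every man $a$, which is immediate from the definition of the $P$-set. So I precompute, in $O(m)$ total time, the rank of every woman in every man's preference list; afterwards testing $P(M_e)\subseteq P(M_{e'})$ — and in particular testing $M_e=M_{e'}$, which by Lemma~\ref{lem:super-stable-diff} is the same as domination in both directions — is a single pass over the men and costs $O(n)$. Performing this test over all $O(m^2)$ pairs of candidate matchings simultaneously removes duplicates and records the order $\leq$ on $I(\mathcal{M}_G)$ in $O(nm^2)$ time; adding the $O(m^2)$ spent generating the candidates, the total is $O(nm^2)$, as claimed. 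The running-time accounting here is routine once one commits to the rank precomputation (so that each matching-versus-matching comparison is $O(n)$ rather than $O(m)$); the genuinely delicate part is the combinatorial identification of the computed family with $I(\mathcal{M}_G)$ in the previous paragraph.
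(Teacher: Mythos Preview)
Your proposal is correct and follows essentially the same approach as the paper: run Algorithm~\ref{alg:man-super-stable} once per edge to obtain the candidate matchings in $O(m^2)$ time, then compare all $O(m^2)$ pairs in $O(n)$ each to recover the order, for a total of $O(nm^2)$. The paper's own proof is far terser---it simply asserts that running Algorithm~\ref{alg:man-super-stable} for each edge computes $I(\mathcal{M}_G)$ and that each pairwise check takes $O(n)$---whereas you supply the justification that the family of distinct outputs coincides with $I(\mathcal{M}_G)$ (via the $b^\ast$ argument and Lemma~\ref{lem:super-stable-diff}) and explain the rank precomputation that makes each comparison $O(n)$; these additions are sound and fill in details the paper leaves implicit.
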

\begin{proof}
$I(\mathcal{M}_G)$ can be computed in $O(m^2)$ time by running Algorithm \ref{alg:man-super-stable} for each edge $(m,w) \in E$. The set $I(\mathcal{M}_G)$ has at most $m$ elements. By checking each pair of $I(\mathcal{M}_G)$, we can construct the partial order. Each check takes $O(n)$ time. Thus, the total time is $O(nm^2)$.
\end{proof}
\section{A Maximal Sequence of Super-stable Matchings} \label{sec:sequence}
Representation via irreducible matchings is intuitive, but the time complexity is high. In this section, we give another representation via rotation poset and the time complexity to construct this rotation poset is only $O(mn)$.

Rotation poset derives from the concept of {\em minimal differences} of a ring of sets. A chain $C = \{C_1, \cdots, C_q\}$ in $\mathcal{F}$ is an ordered set of elements of $\mathcal{F}$ such that $C_i$ is an immediate predecessor of $C_{i+1}$ for each $i \in [q]$. The maximal chain is a chain that begins at the minimal element of $\mathcal{F}$, $F_0$ and ends at the maximal element of $\mathcal{F}$, $F_z$. Let $F_i$ and $F_{i+1}$ be two elements of $\mathcal{F}$ such that $F_i$ is an immediate predecessor of $F_{i+1}$. The difference $D = F_{i+1} \backslash F_{i}$ is called a {\em minimal difference} of $\mathcal{F}$. Note that for each two consecutive elements of a chain $C$, there is a minimal difference $D$, we say that $C$ contains $D$. The following two theorems give another version of Birkhoff's representation theorem in the language of minimal differences. The reader can find more details in Irving and Gusfield's book \cite{DBLP:books/daglib/0066875}.

\begin{thm} \cite{DBLP:books/daglib/0066875}
If $F$ and $F'$ are two elements in $\mathcal{F}$ such that $F \subset F'$, then every chain from $F$ to $F'$ in $\mathcal{F}$ contains exactly the same set of minimal differences (in a different order).
\end{thm}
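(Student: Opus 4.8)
The plan is to use the fact that $\mathcal{F}$, being closed under union and intersection, is a finite distributive lattice under $\subseteq$, with join $=$ union and meet $=$ intersection, and to argue by induction on $|F'\setminus F|$. Here ``chain from $F$ to $F'$'' means a saturated chain $F = C_0 \lessdot C_1 \lessdot \cdots \lessdot C_q = F'$ (each $C_i$ an immediate predecessor of $C_{i+1}$), and the whole argument takes place inside the interval sublattice $[F,F'] = \{G \in \mathcal{F} : F \subseteq G \subseteq F'\}$, which is again a finite distributive lattice. Throughout, write $\mathrm{MD}$ for the set of minimal differences along a chain.

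The heart of the proof is a parallelogram lemma: if $X$ and $Y$ are two \emph{distinct} elements of $[F,F']$ each covering $F$, then (i) $X \cap Y = F$, (ii) $Z := X \cup Y$ covers both $X$ and $Y$, and (iii) the minimal differences pair up as $Z \setminus Y = X \setminus F$ and $Z \setminus X = Y \setminus F$. Part (i) holds because $X$ and $Y$ are incomparable atoms of $[F,F']$, so $F \subseteq X \cap Y \subsetneq X$ together with $F \lessdot X$ forces $X \cap Y = F$. Part (ii) is the standard isomorphism $[X \cap Y, Y] \cong [X, X\cup Y]$ of a distributive (indeed modular) lattice: since $[F,Y] = [X\cap Y, Y]$ is a two-element chain, so is $[X,Z]$, giving $X \lessdot Z$, and the mirror argument gives $Y \lessdot Z$. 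Part (iii) is the one-line set computation $Z \setminus Y = (X\cup Y)\setminus Y = X \setminus Y = X \setminus (X\cap Y) = X \setminus F$, and symmetrically for the other equation. I expect re-deriving ``$Z$ covers $X$'' from distributivity (or quoting the modular-lattice interval isomorphism) to be the only real content; the rest is bookkeeping.

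With the lemma in hand the induction runs as follows. Take any two saturated chains $F = X_0 \lessdot X_1 \lessdot \cdots \lessdot F'$ and $F = Y_0 \lessdot Y_1 \lessdot \cdots \lessdot F'$. If $X_1 = Y_1$, their tails are saturated chains in $[X_1,F']$ with $|F'\setminus X_1| < |F'\setminus F|$, so the induction hypothesis finishes the job. If $X_1 \neq Y_1$, set $Z = X_1 \cup Y_1$ and extend $Z$ to $F'$ by an arbitrary saturated chain $\sigma$. Applying the induction hypothesis inside $[X_1,F']$ to the $X$-tail and to the chain ``$X_1 \lessdot Z$ followed by $\sigma$'', we get that the set of minimal differences of the first chain is $\{X_1\setminus F\} \cup \{Z\setminus X_1\} \cup \mathrm{MD}(\sigma)$, which by the parallelogram lemma equals $\{X_1\setminus F,\; Y_1\setminus F\} \cup \mathrm{MD}(\sigma)$; the symmetric computation shows the set of minimal differences of the second chain is exactly the same set. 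Since along any single saturated chain the successive minimal differences $X_i\setminus X_{i-1}$ are nonempty and pairwise disjoint (hence partition $F'\setminus F$), this common set also pins down the common length, so ``the same set, in a different order'' is precisely what comes out.

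Finally, I would note an equivalent route via Birkhoff's representation theorem quoted above: identify $[F,F']$ with the lattice of down-sets of its poset $P$ of join-irreducible (``irreducible'') elements; a saturated chain then corresponds to a linear extension of $P$ that adds one join-irreducible at a time; and one checks that the minimal difference created when a join-irreducible $p$ is adjoined to a down-set $I$ (with $p$ minimal in $P\setminus I$) is $\tilde p \setminus \tilde p^{\flat}$, where $\tilde p \in \mathcal{F}$ corresponds to $p$ and $\tilde p^{\flat}$ is its unique lower cover, \emph{independently} of $I$ — because $F_I \cap \tilde p \in \mathcal{F}$ lies strictly below $\tilde p$ and hence is contained in $\tilde p^{\flat}$. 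Either way the conclusion is that the minimal differences along a chain are exactly the order-independent ``labels'' of the covering pairs of $[F,F']$.
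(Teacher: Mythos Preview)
The paper does not give its own proof of this theorem: it is quoted from Gusfield and Irving's book as background, with the citation \cite{DBLP:books/daglib/0066875} and no accompanying argument. So there is nothing in the paper to compare against.

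Your proof is correct and is essentially the standard Jordan--H\"older/exchange argument for finite distributive (indeed modular) lattices. The parallelogram lemma is exactly right: distinct covers $X,Y$ of $F$ have $X\cap Y=F$ by the covering relation, the interval isomorphism $[X\cap Y,Y]\cong[X,X\cup Y]$ in a modular lattice gives $X\lessdot X\cup Y$, and the set-difference identity $Z\setminus Y = X\setminus F$ is a one-liner. The induction then goes through cleanly; the only small thing to be explicit about is that the induction hypothesis is applied twice (once in $[X_1,F']$, once in $[Y_1,F']$), each time with strictly smaller $|F'\setminus(\cdot)|$, which you do note. The disjointness remark that the minimal differences along any single saturated chain partition $F'\setminus F$ is a nice sanity check but not actually needed for the equality-of-sets conclusion.

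Your alternative sketch via Birkhoff's representation is also valid and is closer in spirit to how the Gusfield--Irving book treats these things (each minimal difference corresponds to a unique join-irreducible, independent of where in the chain it appears). Either route is fine.
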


\begin{thm} \cite{DBLP:books/daglib/0066875}
Let $D(\mathcal{F})$ denote the set of all minimal differences in $\mathcal{F}$. For two minimal differences $D$ and $D'$, $D \prec D'$ if and only if $D$ appears before $D'$ on every maximal chain in $\mathcal{F}$. There is a one-to-one correspondence between the elements of $\mathcal{F}$ and the closed subsets of $D(\mathcal{F})$. 
\end{thm}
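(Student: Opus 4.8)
The plan is to construct an explicit order isomorphism $\Phi$ from $(\mathcal F,\subseteq)$ onto the inclusion-ordered poset of closed (down-closed) subsets of $(D(\mathcal F),\prec)$; the bijection in the statement is then immediate, and the relation $\prec$ only needs to be checked to be a strict partial order, which is routine (irreflexivity and transitivity are obvious, and $D\prec D'$ and $D'\prec D$ cannot both hold since a single maximal chain cannot list $D$ before $D'$ and also $D'$ before $D$). For $F\in\mathcal F$ I would set $\Phi(F)$ to be the set of minimal differences occurring on some chain from the minimum $F_0$ to $F$; this is well defined by the preceding theorem. Writing such a chain as $F_0\subset F_1\subset\cdots\subset F_k=F$, the differences $F_i\setminus F_{i-1}$ are pairwise disjoint and partition $F\setminus F_0$, so $F=F_0\cup\bigcup_{D\in\Phi(F)}D$; hence $\Phi$ is injective, and moreover $\Phi$ is order preserving in both directions ($F\subseteq F'$ lets one route a chain from $F_0$ through $F$ to $F'$, giving $\Phi(F)\subseteq\Phi(F')$ by the preceding theorem; conversely $\Phi(F)\subseteq\Phi(F')$ forces $F=F_0\cup\bigcup\Phi(F)\subseteq F_0\cup\bigcup\Phi(F')=F'$).

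Next I would show that $\Phi(F)$ is always closed: given $D\in\Phi(F)$ and $D'\prec D$, extend a chain from $F_0$ to $F$ to a maximal chain of $\mathcal F$; on it $D$ lies in the initial segment ending at $F$, and $D'$ precedes $D$, hence $D'$ lies in that segment too, so $D'\in\Phi(F)$. The remaining and main task is surjectivity: every closed $S\subseteq D(\mathcal F)$ should equal $\Phi(F)$ for some $F$ --- concretely $F=F_0\cup\bigcup_{D\in S}D$, obtained as the element reached after $|S|$ steps along a maximal chain whose first $|S|$ minimal differences are exactly the members of $S$.

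Producing such a chain is where I expect the real work to be, and it rests on a local exchange lemma: if $D_i,D_{i+1}$ are the consecutive minimal differences across $F_{i-1}\subset F_i\subset F_{i+1}$ on a chain and $D_i\not\prec D_{i+1}$, then there is $F_i'\in\mathcal F$ with $F_{i-1}\subset F_i'\subset F_{i+1}$ whose two differences are $D_{i+1}$ then $D_i$. This is exactly the step that uses closure of $\mathcal F$ under unions and intersections: the height-two interval $[F_{i-1},F_{i+1}]$ of the distributive lattice $\mathcal F$ is either a three-element chain or a copy of the Boolean lattice on two atoms, and one argues that in the chain case $D_i\prec D_{i+1}$ must hold, so $D_i\not\prec D_{i+1}$ leaves the Boolean case, which supplies the other middle element $F_i'$ with the differences transposed. (This also uses that distinct minimal differences are disjoint as subsets of the base set, itself a consequence of the ring-of-sets axioms via the behaviour of covers in a distributive lattice.) Granting the exchange lemma, the construction is a bubble sort: on any maximal chain, whenever a minimal difference outside $S$ immediately precedes one inside $S$, closedness of $S$ makes the two $\prec$-incomparable, so they may be transposed, and iterating moves all of $S$ into an initial segment, giving the required $F$. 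The hardest point is thus the exchange lemma --- specifically, verifying that a three-chain interval forces $\prec$-comparability of its two differences --- while everything else is telescoping bookkeeping along chains.
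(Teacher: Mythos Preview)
The paper does not prove this theorem at all: it is stated with the citation \cite{DBLP:books/daglib/0066875} and accompanied by the remark ``The reader can find more details in Irving and Gusfield's book.'' It functions purely as background imported from that reference, so there is no in-paper argument to compare your proposal against.

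As for the proposal itself, your outline is the standard route to Birkhoff's representation theorem in the ring-of-sets language and is essentially sound. A few remarks. First, the well-definedness, injectivity, monotonicity, and closedness of $\Phi(F)$ all follow cleanly from the preceding cited theorem exactly as you say. Second, your identification of the exchange lemma as the crux is correct, but the justification you sketch for the three-chain case is incomplete as stated: knowing only that $[F_{i-1},F_{i+1}]$ is a three-element chain does not by itself show that $D_i$ precedes $D_{i+1}$ on \emph{every} maximal chain of $\mathcal F$, since other maximal chains need not pass through this interval. The missing ingredient is that for any $F\in\mathcal F$ and any minimal difference $D$ one has either $D\subseteq F\setminus F_0$ or $D\cap(F\setminus F_0)=\varnothing$ (a consequence of the preceding theorem applied to a chain through $F$), so if on some maximal chain $D_{i+1}$ were added at a stage $G$ before $D_i$, then $G\cap F_{i+1}\in\mathcal F$ would contain $F_{i-1}\cup D_{i+1}$ but not $D_i$, producing a second intermediate element in $[F_{i-1},F_{i+1}]$ and contradicting the three-chain hypothesis. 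With that patch the bubble-sort argument for surjectivity goes through. This is precisely the argument one finds in Gusfield and Irving, so had the paper included a proof, yours would match it.
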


In the context of super-stable matchings, we want to compute a maximal sequence of super-stable matchings in $\mathcal{M}(G)$, i.e. a sequence $M_0 \succ M_1 \succ \cdots \succ M_z$ where $M_0$ is the man-optimal super-stable matching and $M_z$ is the woman-optimal super-stable matching and for each $1 \leq i \leq z$, there is no super-stable matching $M'$ such that $M_{i-1} \succ M' \succ M_i$. We call a matching $M'$ a strict successor of a matching $M$ if $M'$ is a successor of $M$, i.e. $M \succ M'$ and there exists no super-stable matching $M''$ such that $M \succ M'' \succ M'$. We can solve this problem by computing a strict successor of any super-stable matching $M$.

Let $M$ be a super-stable matching in $G$ and $m$ a vertex in $A$. Suppose that there exists a super-stable matching $M'$ such that $m$ gets a worse partner in $M'$ than in $M$, i.e. $M(m) \succ_m M'(m)$. Let $w = M'(m)$, by Lemma \ref{lem:super-stable-diff}, $w$ must be matched in $M$ and $m \succ_w M(w)$. Hence we are essentially searching for some vertex $w$ such that $M(m) \succ_m w$ and $m \succ_w M(w)$. In Algorithm \ref{alg:seq-super-stable}, the set $E_c$ contains for each man $m$ highest ranked edges incident to him that satisfies the condition above. For each man $m$, the candidate edge $(m,w)$ is not unique, there might be other edge $(m,w')$ that forms a tie with $(m,w)$. While in the case of strict preference list, the candidate edge is unique. 

A strongly connected component $S$ of a directed graph $G$ is a subgraph $S$ that is strongly connected, i.e. there is a path in $S$ in each direction between each pair of vertices of $S$, and is maximal with this property: no additional edges or vertices from $G$ can be included in the subgraph without breaking its property of being strongly connected. We say that $e = (m,w)$ is an outgoing edge of $S$ if $m \in S$ and $w \notin S$.

In Algorithm \ref{alg:seq-super-stable} given below we maintain a directed graph $G_d = (V, E_d)$, whose every edge $(m,w) \in E_d \cap M$ is directed from $w$ to $m$ and every other edge $(m,w)$ is directed from $m$ to $w$. $G_d$ is a subgraph of $G$ that contains the edges the algorithm traverses so far. The basic idea of this algorithm is that for each man $m$ such that $M(m) \neq M_z(m)$, we traverse the preference list of $m$ until we find some candidate edges defined above. We add the edges traversed into $G_d$ and the candidate edges into $G_c$. For each strongly connected component $S$ of $G_d$ without outgoing edges, we try to find a perfect matching on $S$ in $G_c = (V, E_c)$. If we are successful, we find a strict successor of $M$. Otherwise, we modify $G_c$ and $G_d$ by allowing edges of lower ranks. 

\begin{algorithm}
\caption{Computing a maximal sequence of super-stable matchings}
\label{alg:seq-super-stable}
let $M_0$ be the (unique) man-optimal super-stable matching of $G$.\\
let $M_z$ be the (unique) woman-optimal super-stable matching of $G$.\\
$M \leftarrow M_0$\\
let $M'$ contain edge $(m, M(m))$ for each man $m$ such that $M(m) =_m M_z(m)$\\
let $E_d$ contain all edges of $M$\\
let $G_d$ be the directed graph $(V, E_d)$ such that each edge $(m,w) \in E_d \cap M$ is directed from $w$ to $m$ and every other edge $(m,w)$ is directed from $m$ to $w$\\
$E' \leftarrow E\backslash E_d$\\
let $E_c = M'$ and $G_c = (V, E_c)$\\
for each $(m,w) \in M$ remove from $E'$ each edge $(m',w)$ such that $m' \prec_w m$ and each edge $(m,w')$ such that $w' \succeq_m w$\\ \label{line:remove_before}
{\bf repeat}\\
\h{\bf while} $(\exists m \in A)$ $deg_{G_c}(m) = 0$ and $outdeg(S(m)) = 0$ {\bf do}\\
\h\h add the set $E_m$ of top choices of $m$ from $E'$ to $E_d$\\
\h\h {\bf if} $outdeg(S(m)) = 0$ {\bf then}\\
\h\h\h add every edge $(m,w) \in E_m$ such that $m \succ_w M(w)$ and $M(m) \succ_m w$ to $E_c$\\
\h\h\h for each edge $(m,w)$ of $E_c$ that becomes strictly dominated by some added \\
\h\h\h edge $(m', w)$ remove it from $G_c$\\ \label{line:remove-edge-1}
\h\h\h remove $E_m$ from $E'$\\ 
\h\h {\bf end if}\\
\h {\bf end while}\\

\h{\bf for} each $m \in A$ such that $outdeg(S(m)) = 0$ {\bf do}\\
\h\h delete all lowest ranked edge in $E_c \cup E'$ incident to any $w \in S$ such that $w$ is \\ 
\h\h multiple engaged\\ \label{line:remove-edge-2}
\h{\bf end for}\\

\h{\bf while} $(\exists S)$ $outdeg(S) = 0$ and $E_c$ is a perfect matching on $S$ {\bf do}\\ \label{line:terminate}
\h\h$M \leftarrow (E_c\cap S) \cup (M \backslash S)$\\
\h\h$M_i \leftarrow M$\\
\h\h output $M_i$\\
\h\h$i \leftarrow i+1$\\
\h\h update $G_c$ and $G_d$: $E_c\cap S$ contains only edges $(m, M(m))$ such that \\\label{line:update}
\h\h $M(m) =_m M_z(m)$; an edge $(m,w) \in S$ stays in $G_d$ only if $w = M(m)$\\
\h\h and $rank_w(m) \le rank_M(w)$\\
\h{\bf end while}\\
{\bf until} $(\forall m\in A)$ $rank_M(m) = rank_{M_z}(m)$
\end{algorithm}

\subsection{Correctness of Algorithm \ref{alg:seq-super-stable}}
Lemma \ref{lem:update} proves that any edge removed from $G_d$ (line \ref{line:remove_before} and line \ref{line:update}) never block any super-stable matching that the algorithm will output.
\begin{lem} \label{lem:update}
Let $M$ be a super-stable matching in $G$. For any successor $N$ of $M$ such that $N$ is also a super-stable matching in $G$ and each $(m,w) \in M$, any edge $(m, w')$ such that $w' \succeq_m w$ or $(m', w)$ such that $m \succ_w m'$ cannot block $N$.
\end{lem}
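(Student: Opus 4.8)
The plan is to prove Lemma \ref{lem:update} by exploiting the characterisation of the symmetric difference of two super-stable matchings together with the domination ordering. Fix a super-stable matching $M$, a super-stable successor $N$ with $M \succ N$, and an edge $(m,w) \in M$. I want to show that neither an edge $(m,w')$ with $w' \succeq_m w$ nor an edge $(m',w)$ with $m \succ_w m'$ can block $N$. The two sub-claims are essentially mirror images, so I would phrase the argument for one and remark that the other is symmetric (with the asymmetry between $\succeq$ and $\succ$ coming from Lemma \ref{lem:super-stable-diff}, which forbids a man being matched to a tied-but-different partner across two super-stable matchings, and from the fact that $M$ dominates $N$ on the men's side, hence $N$ dominates $M$ on the women's side).

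First I would record the consequences of $M \succeq N$. Since each man $m$ weakly prefers $M(m)$ to $N(m)$, and by Lemma \ref{lem:super-stable-diff} a man cannot be matched to two distinct tied partners in $M$ and $N$, we in fact have either $M(m) = N(m)$ or $M(m) \succ_m N(m)$ strictly. Dually, every woman $w$ weakly prefers $N(w)$ to $M(w)$: indeed, the standard argument on the alternating cycles of $M \oplus N$ (the suppressed lemma in the excerpt) shows the men's improvement direction forces the women's deterioration direction, so $N(w) \succeq_w M(w)$, and again by Lemma \ref{lem:super-stable-diff} this is either equality or strict preference $N(w) \succ_w M(w)$.

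Now consider an edge $(m,w') \notin N$ with $w' \succeq_m w = M(m)$. If it blocks $N$ then $w' \succeq_m N(m)$ (automatic here since $w' \succeq_m M(m) \succeq_m N(m)$) and $m \succeq_{w'} N(w')$ (or $w'$ is unmatched in $N$). But $w' \succeq_m M(m)$ means $(m,w')$ cannot block $M$, so either $w'$ is matched in $M$ with $M(w') \succ_{w'} m$, or ($w'$ matched with $M(w') =_{w'} m$), or $w'$ is unmatched in $M$. In the unmatched case, by Theorem \ref{thm: super-stable-vertex} $w'$ is unmatched in $N$ as well, and then $(m,w')$ blocks $M$ — contradiction, since the $m$-side of the blocking condition holds by $w' \succeq_m M(m)$ and the $w'$-side holds vacuously. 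In the matched case, from the women-side domination $N(w') \succeq_{w'} M(w') \succeq_{w'} m$, with the first inequality strict unless $N(w') = M(w')$; in all cases $N(w') \succ_{w'} m$ or $N(w') =_{w'} m$ but $N(w') \neq m$, so $m \succeq_{w'} N(w')$ fails, i.e. $(m,w')$ does not block $N$. The remaining subtlety is the tie case $M(w') =_{w'} m$: here I would invoke that $(m,w') \notin M$ forces $M(w') \neq m$, so $w'$ has a tied-but-distinct partner, and then $m \succeq_{w'} N(w')$ would give $N(w') \succeq_{w'} M(w')$ with $N(w') \neq m = $ tied alternative, which combined with domination pins things down to a contradiction with super-stability of $M$ (the edge $(m, M(w'))$ or $(M(w'),$ something) — I would check this case carefully as it is the most delicate point.

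The symmetric sub-claim, for an edge $(m',w)$ with $m \succ_w m' = $ partner-related-to-$w$, runs the same way with the roles of men and women swapped, but now the strict inequality $m \succ_w m'$ (rather than $\succeq$) is exactly what compensates for the fact that domination is only weak on the men's side; one checks that $(m',w)$ cannot block $M$ (since $m = M(w) \succ_w m'$ would be needed — wait, here $w$'s current partner in $M$ is at least as good as $m'$), then pushes through to $N$ using $N(w) \succeq_w M(w)$. I expect the main obstacle to be bookkeeping the equality/tie boundary cases correctly: the lemma is stated with an asymmetry ($\succeq_m$ on one side, $\succ_w$ on the other), and getting the tie cases to land on the right side of each inequality — so that a genuine blocking edge for $N$ is always exhibited as a blocking edge for $M$ or $N$ itself — is where the real content lies. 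Everything else is a direct application of Theorem \ref{thm: super-stable-vertex}, Lemma \ref{lem:super-stable-diff}, and the alternating-cycle structure.
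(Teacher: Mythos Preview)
Your approach is the paper's: show the candidate edge cannot block $M$, extract a strict preference on the other endpoint's side, and transport it to $N$ via the domination ordering. The paper does each case in two lines.

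Where you spin your wheels is the ``delicate tie case $M(w') =_{w'} m$'': this case does not exist. If $w' \succeq_m M(m)$ and $(m,w') \notin M$, then the $m$-side of the blocking condition for $M$ already holds, so super-stability of $M$ forces the $w'$-side to fail \emph{strictly}: $w'$ is matched and $M(w') \succ_{w'} m$. A tie $M(w') =_{w'} m$ with $M(w') \neq m$ would itself make $(m,w')$ block $M$, because for super-stability weak preference on both sides suffices. Once you have the strict inequality, $N(w') \succeq_{w'} M(w') \succ_{w'} m$ finishes the first sub-claim with no further cases. The second sub-claim is even shorter and needs no detour through ``$(m',w)$ does not block $M$'': just write $N(w) \succeq_w M(w) = m \succ_w m'$, so the $w$-side of blocking fails for $(m',w)$ in $N$ outright.
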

\begin{proof}
For any edge $(m,w')$ such that $w' \succeq_m w$, this edge $(m,w')$ cannot block $M$ since $M$ is super-stable. Thus we must have $m \prec_{w'} M(w')$. The matching $N$ is a successor of $M$ from the man's point of view, hence from the woman's point of view, $M$ is a successor of $N$. Then we have $m \prec_{w'} N(w')$ since $M(w') \preceq_{w'} N(w')$, which implies that the edge $(m,w')$ would not block $N$. Similarly, for any edge $(m', w)$ such that $m \succ_w m'$, we have $N(w) \succ_w m'$ since $N(w) \succeq_w M(w)$, which implies that the edge $(m', w)$ would not block $N$.
\end{proof}

\begin{lem} \label{lem:delete_dominate}
No edge deleted in line \ref{line:remove-edge-1} can belong to any super-stable matching $N$ dominated by $M$.
\end{lem}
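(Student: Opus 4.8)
The plan is to argue by contradiction. Write $(q,w)$ for the edge deleted in line~\ref{line:remove-edge-1}, removed because the edge $(p,w)$ just inserted into $E_c$ strictly dominates it at $w$, so $p\succ_w q$; recall that $(p,w)$ passed the admission test $p\succ_w M(w)$, $M(p)\succ_p w$, and was a top choice of $p$ in $E'$ at the moment $p$ was processed. Suppose some super-stable matching $N$ with $M\succeq N$ contains $(q,w)$, so $N(w)=q$; note $p$ is matched in $N$ by Theorem~\ref{thm: super-stable-vertex}, since $M(p)\succ_p w$ forces $p$ to be matched in $M$. As $p\succ_w q=N(w)$, the woman $w$ strictly prefers $p$ to her $N$-partner, hence $(p,w)$ blocks $N$ unless $p$ strictly prefers $N(p)$ to $w$; so $w\prec_p N(p)$, and with $M\succeq N$ (so $N(p)\preceq_p M(p)$) we get $w\prec_p N(p)\preceq_p M(p)$.

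I would then identify the edge $(p,N(p))$. Because $(p,w)$ is a top choice of $p$ in $E'$ when $p$ is processed, every edge $(p,w')$ with $w'\succ_p w$ has already left $E'$, and none is in $E_c$ at that moment (otherwise $deg_{G_c}(p)>0$ and $p$ would not be processed). If $w'\succeq_p M(p)$, then Lemma~\ref{lem:super-stable-diff} together with $M\succeq N$ shows the only such edge that can lie in a super-stable matching dominated by $M$ is $(p,M(p))$ itself. Otherwise $w\prec_p w'\prec_p M(p)$ and $(p,w')$ was discarded from $E'$ in an earlier round; I claim no discarded edge can lie in $N$. If $(p,w')$ was found inadmissible, then $M(w')\succeq_{w'}p$, so $(p,w')\in N$ would give $p=N(w')\succeq_{w'}M(w')$ and hence $p=M(w')$ by Lemma~\ref{lem:super-stable-diff}, contradicting $(p,w')\notin M$. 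If it was admissible it first entered $E_c$ and was later deleted --- either by an earlier line~\ref{line:remove-edge-1} step, excluded by applying the present lemma inductively (on the number of such deletions), or by a line~\ref{line:remove-edge-2} step, excluded by the analogous statement for that line. Hence the only surviving possibility is $N(p)=M(p)$.

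The remaining task, ruling out $N(p)=M(p)$, is the step I expect to be the main obstacle. First, $M(p)$ is not tied with $M_z(p)$ in $p$'s list, for otherwise $(p,M(p))\in M'\subseteq E_c$ and $p$ would not be processed; so $M(p)\succ_p M_z(p)$. Next, the degenerate case $(q,w)\in M'$ --- where $q=M(w)=M_z(w)$ --- is impossible, since then $(q,w)$ would lie in every super-stable matching and the admitted edge $(p,w)$ with $p\succ_w q$ would block $M_z$; thus $q\succ_w M(w)$. By Theorem~\ref{thm: super-stable-vertex}, $M\oplus N$ is then a union of alternating cycles; $w$ lies on one of them and $p$ lies on none. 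I would trace that cycle --- along which, since $M\succeq N$, the men are strictly worse and the women strictly better under $N$ than under $M$ --- and use the admission data recorded for $(q,w)$ and for the top choices processed for the men of the cycle to reach a contradiction with the super-stability of $N$ (or $M$) or with $M\succeq N$. I expect this to come out cleanest if Lemma~\ref{lem:update}, the present lemma, and the line~\ref{line:remove-edge-2} statement are proved together as a single invariant, by induction on the sequence of operations Algorithm~\ref{alg:seq-super-stable} performs.
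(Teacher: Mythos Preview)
Your reduction to the single remaining case $N(p)=M(p)$ is sound, and your inductive scheme for eliminating edges $(p,w')$ with $w\prec_p w'\prec_p M(p)$ is actually more explicit than what the paper writes for its Case~1. The genuine gap is exactly where you say it is: the case $N(p)=M(p)$.

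Your proposed attack there---tracing the alternating cycle of $M\oplus N$ through $w$---does not give you leverage, because that cycle lives in $G$, not in the algorithm's state. The men on that cycle need not have been touched at all by the time $(p,w)$ is inserted, so there is no ``admission data recorded for the top choices processed for the men of the cycle'' to appeal to. The paper closes this case with an ingredient you never invoke: the strongly connected component structure of the \emph{directed} graph $G_d$. The point is that line~\ref{line:remove-edge-1} only executes after the test $outdeg(S(p))=0$ has just succeeded; since the arc $(p,w)$ now lies in $E_d$ and points out of $p$, this forces $w\in S(p)$, so $G_d$ contains a directed path $P$ from $w$ back to $p$. The unique arc of $G_d$ entering $p$ is the $M$-edge $(M(p),p)$, hence $P$ passes through $w':=M(p)=N(p)$ and therefore through some non-matching arc $(m'',w')$. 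One then argues that $m''$ cannot be strictly worse off in $N$ than in $M$---otherwise $(m'',w')$, which was a top choice of $m''$ in $E'$ and satisfies $m''\succeq_{w'}M(w')=p=N(w')$ by the initial pruning in line~\ref{line:remove_before}, would block $N$---so $M(m'')=N(m'')$, and the same reasoning propagates backwards along $P$ until it reaches $w$, yielding $M(w)=N(w)$, which contradicts $N(w)=q\succ_w M(w)$.

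In short, the object to trace is a directed path in $G_d$ inside the current SCC, not an $M\oplus N$ cycle; the zero-outdegree condition that guards line~\ref{line:remove-edge-1} is precisely what guarantees such a path exists. Your overall plan of proving Lemmas~\ref{lem:delete_dominate} and~\ref{lem:delete_equal} together by induction is compatible with this: the step ``$w'\succeq_{m''}N(m'')$'' above ultimately relies on the same invariant applied to earlier deletions of edges incident to $m''$, so the mutual induction you propose is in fact needed, but it must be combined with the SCC path-tracing rather than with the symmetric-difference cycle.
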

\begin{proof}
 Suppose that the algorithm deletes an edge $(m,w)$ from $E_c$ because it is dominated by some edge $(m',w)$, i.e. $w$ strictly prefer $m'$ to $m$. We want to show that $(m,w)$ cannot belong to any super-stable matching dominated by $M$. Suppose, for a contradiction, that the edge $(m,w)$ belongs to a super-stable matching $N$ dominated by $M$. $m'$ must match to another woman $w'$ in $N$ and $m'$ strictly prefers $w'$ to $w$, otherwise the edge $(m', w)$ would block $N$. Thus we have that $rank_M(m') \leq rank_{m'}(w') < rank_{m'}(w)$, where the first inequality comes from the fact that $N$ is dominated by $M$. Hence, we have two cases shown as below:\\
 Case $1$: $rank_M(m') < rank_{m'}(w') < rank_{m'}(w)$. In this case, $w'$ must match to a different man $M(w')$ other than $m'$ in $M$. Again because $M$ dominates $N$, $m' \succeq_{w'} M(w')$. If $m' \succ_{w'} M(w')$, by our algorithm, in order to let $(m', w')$ belong to $E_d$, $(m',w')$ must also belong to $E_c$ and this requires that $rank_{m'}(w) = rank_{m'}(w')$, which contradicts with the fact that is $rank_{m'}(w') < rank_{m'}(w)$.\\
 Case $2$: $rank_M(m') = rank_{m'}(w') < rank_{m'}(w)$. First, we rule out the case that $m'$ is indifferent between $M(m')$ and $w' = N(m')$ by Lemma \ref{lem:super-stable-diff}. Thus we must have $w' = M(m') = N(m')$. In order to prove a contradiction, we need the property of strongly connected component. $m'$ and $w$ are in the same strongly connected component, hence there must be a directed path $P$ from $w$ to $m'$. Arc $(w',m')$ is the unique arc that points to $m'$ since $(m',w') \in M$. Hence, there must be an arc $(m'',w')$ in $P$. If $rank_M(m'') < rank_N(m'')$, then $rank_{m''}(w') \leq rank_N(m'')$
 , thus $(m'', w')$ would block $N$. So $M(m'') = N(m'')$. Let $w'' = M(m'')$. $(m'',w'')$ is also in path $P$, Let us continue this process until it reaches to $w$ and we will have $M(w) = N(w)$, which is a contradiction.
\end{proof}

\begin{lem} \label{lem:delete_equal}
No edge deleted in line \ref{line:remove-edge-2} can belong to any super-stable matching $N$ dominated by $M$.
\end{lem}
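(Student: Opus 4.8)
The plan is to argue by contradiction, mirroring the proof of Lemma~\ref{lem:delete_dominate} but with the extra structure coming from ties and from the strongly connected component $S$. Suppose an edge $(m,w)$ is deleted in line~\ref{line:remove-edge-2}, so that $w$ lies in a strongly connected component $S$ of $G_d$ with $outdeg(S)=0$, $w$ is multiple engaged, and $(m,w)$ is a lowest-ranked (in $w$'s list) edge of $E_c\cup E'$ incident to $w$; and suppose, for contradiction, that $(m,w)\in N$ for some super-stable matching $N$ dominated by $M$. First I would record the consequences of $outdeg(S)=0$: the unique out-arc of every woman $w'\in S$ is her $M$-edge, so $M(w')\in S$, and each man of $S$ has all of his traversed top-choice arcs inside $S$. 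Next I would extract a rival suitor at $w$: since $w$ is multiple engaged and strictly-dominated edges at a woman have been removed from $E_c$ in line~\ref{line:remove-edge-1}, there are at least two mutually tied men incident to $w$ in $E_c$, so there is some $m_1\neq m$ with $(m_1,w)\in E_c$; and because $(m,w)$ is a lowest-ranked edge of $E_c\cup E'$ at $w$, we get $m_1\succeq_w m=N(w)$.

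Then I would run a blocking argument on $m_1$. Since $N(w)=m\neq m_1$ and $w$ weakly prefers $m_1$ to $N(w)$, super-stability of $N$ forces $m_1$ to be matched in $N$ with $N(m_1)\succ_{m_1}w$ (otherwise $(m_1,w)$ blocks $N$), and domination gives $M(m_1)\succeq_{m_1}N(m_1)\succ_{m_1}w$, consistent with the candidate rule that put $(m_1,w)$ into $E_c$. By Lemma~\ref{lem:super-stable-diff}, either $M(m_1)\succ_{m_1}N(m_1)$ strictly, or $M(m_1)=N(m_1)$. In the strict case I would argue as in Case~1 of Lemma~\ref{lem:delete_dominate}: writing $w_1=N(m_1)$, the edge $(m_1,w_1)$ satisfies $M(m_1)\succ_{m_1}w_1\succ_{m_1}w$ and $m_1\succ_{w_1}M(w_1)$ (the latter using Lemma~\ref{lem:super-stable-diff} to rule out $m_1=_{w_1}M(w_1)$), so it survives the initial pruning of line~\ref{line:remove_before} and is a candidate edge; hence it would have been added to $E_c$, and $m_1$ cannot be reprocessed while this strictly better edge sits in $E_c$, so $(m_1,w)$ could never have entered $E_c$ --- a contradiction. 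We are therefore left with $M(m_1)=N(m_1)=:w_1$.

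Finally I would propagate the equality $M(m_1)=N(m_1)$ back to $w$. Pick a directed path $P$ in $G_d$ from $w$ to $m_1$; it stays inside $S$ because $outdeg(S)=0$. The only arc of $G_d$ entering a man is his $M$-edge, so the arc of $P$ into $m_1$ is $w_1\to m_1$ and thus $w_1\in P$; the only arc leaving a woman is her $M$-edge, so the arc of $P$ into $w_1$ is some $m''\to w_1$ with $(m'',w_1)\notin M$, and using that the algorithm never traverses a man's list past his partner in a super-stable matching dominated by $M$ (so $w_1\succeq_{m''}N(m'')$ once $M(m'')\succ_{m''}N(m'')$) one gets that $M(m'')\neq N(m'')$ would make $(m'',w_1)$ block $N$; hence $M(m'')=N(m'')$, and $M$ and $N$ also agree at $M(m'')$. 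Iterating along $P$ forces $M$ and $N$ to agree all the way to $w$, so $(m,w)=(N(w),w)=(M(w),w)\in M$; but every edge of $E_c$ at $w$ has the man strictly preferred by $w$ to $M(w)$, and every edge of $E'$ at $w$ has the man weakly preferred by $w$ to $M(w)$ but distinct from $M(w)$, so $(m,w)$ is not the $M$-edge at $w$ --- a contradiction, proving the lemma.

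The step I expect to be the main obstacle is making the rival extraction and the ``traversal invariants'' precise. I need a clean invariant that a multiply-engaged woman $w$ really does retain a candidate edge $(m_1,w)\in E_c$ with $m_1$ in the same component $S$ and $m_1\succeq_w m$, and I must reconcile the two possible origins of the deleted edge (from $E_c$ versus from $E'$) with the rank inequality that makes $(m_1,w)$ a blocking edge of $N$. A second delicate point is that $G_d$ contains arcs that were never in $E_c$ (top choices $(m'',w')$ with $w'$ weakly preferring $M(w')$ to $m''$) as well as arcs later removed from $E_c$ --- possibly earlier in the very same execution of line~\ref{line:remove-edge-2} --- so the propagation along $P$ should be organised as an induction on the algorithm's progress, with Lemma~\ref{lem:delete_dominate} and the already-proved portion of this argument supplying the base cases.
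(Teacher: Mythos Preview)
Your proposal is correct and takes the same approach as the paper's (very brief) proof: extract a rival edge $(m_1,w)\in E_c$ at the multiply-engaged woman, use super-stability of $N$ to force $N(m_1)\succ_{m_1}w$, and then reduce to the case split and SCC path-propagation of Lemma~\ref{lem:delete_dominate}. The paper in fact writes only two sentences here and simply invokes ``the same argument as in Lemma~\ref{lem:delete_dominate}''; you have spelled that argument out, correctly flagged the implicit induction over earlier deletions that underlies the traversal invariant, and your weak inequality $m_1\succeq_w m$ (rather than the paper's $m=_w m'$) is the right formulation since the deleted edge may come from $E'$ and sit strictly below the tied $E_c$-edges at $w$.
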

\begin{proof}
Suppose that an edge $(m,w)$ is deleted in line \ref{line:remove-edge-2}, there must be an edge $(m',w) \in E_c$ such that $m =_w m'$. Let $N$ be a super-stable matching dominated by $M$ that includes the edge $(m,w)$. $m'$ must match to a woman $w'$ in $N$ and $m'$ strictly prefer $w'$ to $w$, otherwise the edge $(m',w)$ would block $N$. $m'$ and $w$ are in the same strongly connected component. By the same argument as in Lemma \ref{lem:delete_dominate}, we will have a contradiction. We omit the proof here.
\end{proof}

\begin{lem}\label{lem:strict}
The output matching $M_i$ is super-stable and a strict successor of $M_{i-1}$.
\end{lem}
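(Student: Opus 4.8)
The plan is to establish two things: first, that $M_i = (E_c \cap S) \cup (M_{i-1} \setminus S)$ is a valid matching that is super-stable in $G$; second, that $M_{i-1} \succ M_i$ and that no super-stable matching lies strictly between them. For the first part, note that $E_c$ restricted to $S$ is a perfect matching on $S$ by the guard of the \texttt{while} loop at line \ref{line:terminate}, and $M_{i-1} \setminus S$ matches exactly the vertices outside $S$ (since $S$ has no outgoing edges, no $M_{i-1}$-edge straddles the boundary), so $M_i$ is indeed a matching covering the same vertex set as $M_{i-1}$; by Theorem \ref{thm: super-stable-vertex} this is the vertex set common to all super-stable matchings, which is a necessary condition. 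For super-stability, I would argue that any edge $(x,y)$ that could block $M_i$ falls into one of the cases already handled: edges with both endpoints outside $S$ cannot block because $M_{i-1}$ agrees with $M_i$ there and $M_{i-1}$ is super-stable (and by Lemma \ref{lem:update}, the relevant edges incident to $S$-vertices from outside were removed from consideration, or the woman involved ranks her $M_i$-partner at least as high); edges internal to $S$ cannot block because $E_c \cap S$ is built only from candidate edges $(m,w)$ with $M_{i-1}(m) \succ_m w$ and $m \succ_w M_{i-1}(w)$, and the domination-cleanup in lines \ref{line:remove-edge-1} and \ref{line:remove-edge-2} ensures each woman in $S$ is matched in $E_c$ to her best available suitor among those; the crossing edges are controlled by the outdegree-zero property of $S$ together with Lemma \ref{lem:update}.

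For the second part, strict domination $M_{i-1} \succ M_i$ follows because every man $m \in S$ satisfies $M_{i-1}(m) \succ_m M_i(m) = (E_c \cap S)(m)$ by the candidate-edge condition, while every man outside $S$ keeps his partner; and $S$ contains at least one man (it is a nonempty strongly connected component reached because some man has not yet attained his $M_z$-rank), so the domination is strict. By the remark after Lemma \ref{lem:super-stable-diff}, weak domination between distinct super-stable matchings is automatically strict, so $M_{i-1} \succ M_i$ as successors.

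The genuinely delicate step — and the one I expect to be the main obstacle — is showing $M_i$ is a \emph{strict} successor, i.e. that there is no super-stable matching $N$ with $M_{i-1} \succ N \succ M_i$. The natural approach is contradiction: such an $N$ would be dominated by $M_{i-1}$, so by Lemmas \ref{lem:delete_dominate} and \ref{lem:delete_equal} it uses no edge deleted in lines \ref{line:remove-edge-1}–\ref{line:remove-edge-2}, and by Lemma \ref{lem:update} none of the edges pruned at line \ref{line:remove_before} or in earlier updates. One then shows $N$ must agree with $M_{i-1}$ outside $S$ (any deviation there, combined with $N$ being dominated by $M_{i-1}$ and dominating $M_i$, forces a blocking edge, using the strongly-connected-component / directed-path tracing argument exactly as in Lemma \ref{lem:delete_dominate}), and that on $S$ the matching $N$ must either equal $M_{i-1}\cap S$ — impossible since $N \prec M_{i-1}$ forces some man in $S$ strictly worse off, and then the SCC structure propagates the change around a cycle in $E_c \cap S$ forcing $N \cap S = E_c \cap S$ — or equal $E_c \cap S = M_i \cap S$. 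Either way $N = M_{i-1}$ or $N = M_i$, contradicting strictness. The care needed here is in the cycle-propagation argument: one must verify that once one man in $S$ is pushed off his $M_{i-1}$-partner in $N$, the constraint that $N$ be super-stable and dominated by $M_{i-1}$ forces every vertex along the strongly connected structure to take its $E_c$-edge, which is essentially a reprise of the path-tracing in Lemmas \ref{lem:delete_dominate} and \ref{lem:delete_equal} but applied globally to all of $S$ at once.
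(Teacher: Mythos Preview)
Your super-stability argument follows the same case analysis as the paper's: partition potential blocking edges by whether their endpoints lie in $S$, and invoke Lemma~\ref{lem:update} for edges already pruned from $E_d$. One small imprecision: the claim that ``no $M_{i-1}$-edge straddles the boundary'' does not follow from outdegree-zero alone---that only rules out $M_{i-1}$-edges with the woman in $S$ and the man outside (such an edge would be an outgoing arc $w\to m$). The reverse direction needs the extra observation that $E_c$ being a perfect matching on $S$ forces $|S\cap A|=|S\cap B|$, together with the fact that every woman in $S$ has her $M_{i-1}$-partner in $S$, to conclude by counting that every man in $S$ also has his $M_{i-1}$-partner in $S$.

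For the strict-successor half your plan genuinely diverges from the paper's. You aim for a structural argument: show that any intermediate $N$ agrees with $M_{i-1}$ off $S$ and then, by propagating around the strongly connected structure, must coincide with either $M_{i-1}$ or $M_i$ on all of $S$. The paper instead gives a short local contradiction. It takes a man $m$ with $M_{i-1}(m)\succ_m N(m)\succ_m M_i(m)$ and argues directly about the single edge $(m,N(m))$: it cannot still be in $E_c$ (else $(m,M_i(m))$ would never have been added, since $N(m)\succ_m M_i(m)$), and it cannot have been deleted from $E_c$ (Lemmas~\ref{lem:delete_dominate} and~\ref{lem:delete_equal}, as $N$ is dominated by $M_{i-1}$); hence it was never a candidate, which forces $m =_{N(m)} M_{i-1}(N(m))$ and contradicts Lemma~\ref{lem:super-stable-diff}. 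This sidesteps entirely the cycle-propagation step you flag as the main obstacle. Your global route is workable, but the paper's shortcut localises the contradiction at one edge and reuses Lemmas~\ref{lem:delete_dominate}--\ref{lem:delete_equal} and Lemma~\ref{lem:super-stable-diff} rather than re-running a path-tracing argument over all of $S$.
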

\begin{proof}
Note that the algorithm outputs $M_i$ when the edge set $E_c$ is a perfect matching in a strongly connected component $S$ with no outgoing edges and $M_i = (M_{i-1}\backslash S) \cup (E_c \cap S)$. Suppose, for a contradiction, that $M_i$ is blocked by some edge $(m,w) \in E_d$. There are four cases. \\Case 1: $m \notin S$ and $w \notin S$, it is obvious that $(m,w)$ cannot block $M_i$, since it would block $M_{i-1}$ as well. \\Case 2: $m \in S$ and $w \notin S$, this is not possible, because this will imply $S$ has an outgoing edge in $E_d$. \\Case 3: $m \notin S$ and $w \in S$, then $M_i(m)(=M_{i-1}(m)) \succ_m w$, hence $(m,w)$ would not block $M_i$. \\Case 4: $m \in S$ and $w \in S$, if $(m,w)$ never belong to $E_c$, then $M_i(w) \succ_w M_{i-1}(w) =_w m$, $(m,w)$ can not block $M_i$; if $(m,w)$ once belongs to $E_c$ and got deleted later, then $w$ always get a strictly better partner than $m$. We prove that no edge from $E_d$ can block $M_i$. There might be some other edges $e \not\in E_d$ that can potentially block $M_i$. These edges are deleted during the updating of $E_d$. Lemma \ref{lem:update} gives a proof that these set of edges cannot block any matching $N$ that is dominated by $M_{i-1}$. Hence $M_i$ is super-stable.

Next we prove that $M_i$ is a strict successor of $M_{i-1}$. Suppose not and let $m$ be any man in $S$ and $N$ a successor of $M_{i-1}$ such that $M_{i-1}(m) \succ N(m) \succ M_i(m)$. If $(m, N(m)) \in E_c$ and is not deleted during the algorithm, then $(m, M_i(m))$ would not be in $E_c$, which is not true. Since $N$ is a successor of $M$ and is super-stable, by Lemma \ref{lem:delete_dominate} and Lemma \ref{lem:delete_equal}, the edge $(m, N(m))$ can never once belong to $E_c$. Let $w = N(m)$, by our updating rule of $E_d$, we have $N(w) \succeq_w M(w)$. While if $N(w) \succ_w M(w)$, then the edge $(m, w)$ must once belong to $E_c$. Thus we have $N(w) =_w M(w)$, which violates Lemma \ref{lem:super-stable-diff}. 
\end{proof}

\begin{lem}\label{lem:necc}
If $M_i \neq M_z$, the algorithm always outputs a matching.
\end{lem}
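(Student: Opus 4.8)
The plan is to show that, as long as $M_i\neq M_z$, Algorithm \ref{alg:seq-super-stable} cannot cycle through the \textbf{repeat} block forever without emitting a matching. First I would note that between two consecutive outputs the data structures change only monotonically: edges only ever leave $E'$ (moved into $E_d$, or deleted in line \ref{line:remove-edge-2}), a deletion from $E_c$ in line \ref{line:remove-edge-1} or line \ref{line:remove-edge-2} is permanent, and $G_d$ only gains arcs, so its strongly connected components only merge. Hence after finitely many steps the first \textbf{while} loop and the \textbf{for} loop can no longer make any change; if no matching has been output, the configuration is \emph{stuck}, meaning simultaneously: (i) no man $m$ has $deg_{G_c}(m)=0$ and $outdeg(S(m))=0$; (ii) no sink component of $G_d$ contains a multiply engaged woman; (iii) no sink component $S$ of $G_d$ has $E_c$ a perfect matching on $S$. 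It therefore suffices to rule out a stuck configuration when $M_i\neq M_z$.

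I would first record two invariants, both immediate. An edge of $M'$ is never deleted from $E_c$: if $(m,M(m))\in M'$ then $M(m)=_m M_z(m)$, so $M(m)=M_z(m)$ by Lemma \ref{lem:super-stable-diff}, so $(m,M(m))\in M_z$, which is dominated by the current matching, so Lemmas \ref{lem:delete_dominate} and \ref{lem:delete_equal} apply. And whenever a candidate edge $(m,w)$ enters $E_c$, its two endpoints already lie in one component of $G_d$ (the test $outdeg(S(m))=0$ was just passed with the arc $m\to w$ present), and hence in one component forever after. Using these I would show that at a stuck configuration every sink component of $G_d$ is a single ``finished'' man, i.e.\ a man $m$ with $(m,M(m))\in M'$. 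Since every non-matching arc runs from a man to a woman and every matching arc from a woman to her partner, a man's only in-arc is his matching arc and a woman's only out-arc is her matching arc; thus a finished man (who keeps $deg_{G_c}=1$ by the first invariant, is therefore never processed, and so has no out-arc) is a singleton component, while a singleton woman is impossible. For a sink component $S$ with $|S|\geq 2$, the arc structure forces $m\mapsto M(m)$ and $w\mapsto M(w)$ to be well-defined injections between $S\cap A$ and $S\cap B$, so $|S\cap A|=|S\cap B|$; by the second invariant every $E_c$-edge meeting $S$ lies inside $S$; and then ``each man of $S$ has at least one $E_c$-edge'' (from (i), as $outdeg(S)=0$) together with ``each woman of $S$ has at most one $E_c$-edge'' (from (ii)) force, by counting the edges inside $S$, that $E_c\cap S$ is a perfect matching on $S$ --- contradicting (iii). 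Hence all sink components are finished singleton men.

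Finally I would derive a contradiction. As $M_i\neq M_z$, Lemma \ref{lem:super-stable-diff} supplies an \emph{active} man $m^{\circ}$ with $M_i(m^{\circ})\succ_{m^{\circ}}M_z(m^{\circ})$, whose component is not a sink; following out-arcs through the condensation DAG of $G_d$ reaches a sink, a finished man $m^{*}$. The last arc of this walk enters $m^{*}$, so it is his matching arc $M(m^{*})\to m^{*}$; the arc entering $M(m^{*})$ is a non-matching arc $m''\to M(m^{*})$ with $m''\neq m^{*}$, so $m''$ was processed (with $deg_{G_c}(m'')=0$ at that moment) and is active. The crux is to prove that the algorithm never moves a woman ranked below $M_z(m'')$ onto $m''$'s list, so in particular $M(m^{*})\succeq_{m''}M_z(m'')$: if instead $M_z(m'')\succ_{m''}M(m^{*})$, then because $m''$'s list is walked top-down, the tie group of $M_z(m'')$ left $E'$ before that of $M(m^{*})$, and that removal was neither line \ref{line:remove_before} nor line \ref{line:remove-edge-2} (as $(m'',M_z(m''))\in M_z$ is dominated, Lemma \ref{lem:delete_equal} forbids the latter), hence it occurred at a successful processing of $m''$ that simultaneously placed the qualifying candidate edge $(m'',M_z(m''))$ into $E_c$ --- qualifying because by Lemma \ref{lem:super-stable-diff} $m''\succ_{M_z(m'')}M(M_z(m''))$ strictly and $M(m'')\succ_{m''}M_z(m'')$ --- an edge that is never deleted (Lemmas \ref{lem:delete_dominate} and \ref{lem:delete_equal}), so $deg_{G_c}(m'')\geq 1$ thereafter and $m''$ is never processed again, contradicting that $m''$ later added the lower-ranked arc to $M(m^{*})$. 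Since also $m''\succeq_{M(m^{*})}M(M(m^{*}))=m^{*}=M_z(M(m^{*}))$ (the condition putting $M(m^{*})$ on $m''$'s reduced list) and $M(m^{*})\neq M_z(m'')$ (because $M_z(M(m^{*}))=m^{*}\neq m''$), the edge $(m'',M(m^{*}))$ blocks $M_z$, contradicting super-stability of $M_z$. Therefore no stuck configuration can arise while $M_i\neq M_z$, and the algorithm must output a matching.

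I expect the main obstacle to be the last step, specifically the ``never traverse past $M_z(m'')$'' claim: it is the only point where one must reason carefully about the dynamics of the list traversal --- successful versus paused processings, when a tie group actually leaves $E'$, and the permanence of any $E_c$-addition of an edge of $M_z$. Setting up the downstream path and reading off the blocking edge from it is routine once the arc-direction bookkeeping and the two invariants are in place, and the counting argument for sink components is likewise clean.
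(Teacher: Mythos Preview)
Your argument is correct and substantially more complete than the paper's. The paper's own proof of this lemma is a two-sentence remark: it asserts that the only way the output loop can fail is if $G_d$ has no sink strongly connected component, and then observes that the condensation of any finite digraph is a DAG and hence always has a sink. That observation alone does not address the actual guard on line~\ref{line:terminate}, namely that $E_c$ restricted to some sink component be a \emph{perfect matching}; the paper simply leaves that implication unargued.

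You take a genuinely different route. Rather than assert the equivalence the paper claims, you analyse what a \emph{stuck} configuration would have to look like: using the arc-direction bookkeeping and your two invariants, you show that any sink component of size~$\ge 2$ would be forced (by the $|S\cap A|=|S\cap B|$ count together with ``every man in $S$ has an $E_c$-edge'' and ``no woman in $S$ is multiply engaged'') to carry a perfect matching in $E_c$, contradicting stuckness; hence every sink must be a finished singleton man. You then walk from an active man down the condensation to such a sink and exhibit a blocking pair for $M_z$, which is the real content the paper omits. The ``never traverse past $M_z(m'')$'' step is exactly where the work lies, and your justification via Lemmas~\ref{lem:delete_dominate} and~\ref{lem:delete_equal} and the top-down removal of tie groups from $E'$ is sound.

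One small imprecision worth tightening: your sentence ``a singleton woman is impossible'' is literally true only for \emph{matched} women (an unmatched woman is an isolated sink singleton, as is an unmatched man after line~\ref{line:remove_before} strips all his edges). This does not damage the proof, because such isolated vertices have no in-arcs and are therefore unreachable from the active man $m^{\circ}$ in your final walk; you might simply add the word ``reachable'' when you classify sink components, or restrict attention to matched vertices as the paper implicitly does.
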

\begin{proof}
The algorithm will end without outputting any matching if and only if in line \ref{line:terminate} the while loop, it cannot find any strongly connected component with no outgoing edges. Note that every directed graph can be expressed as a directed acyclic graph 
of its strongly connected components. Hence, we can always find a strongly connected component without outgoing edges.
\end{proof}

\begin{thm}
Algorithm \ref{alg:seq-super-stable} computes a maximal sequence of super-stable matchings.
\end{thm}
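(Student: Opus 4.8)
We may assume $\mathcal{M}_G \neq \emptyset$, since otherwise there is no super-stable matching and nothing to construct; by the distributive lattice structure \cite{spieker1995set, manlove2002structure} both the man-optimal matching $M_0$ and the woman-optimal matching $M_z$ then exist and are unique, so all quantities referenced in Algorithm~\ref{alg:seq-super-stable} are well defined. The plan is to verify, one at a time, the three requirements in the definition of a maximal sequence of super-stable matchings: that the sequence begins at $M_0$; that consecutive terms $M_{i-1}, M_i$ satisfy $M_{i-1} \succ M_i$ with no super-stable matching strictly between them; and that the sequence ends at $M_z$.

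The first two requirements follow directly from the lemmas already established. The algorithm initialises $M \leftarrow M_0$, so the sequence of matchings it holds starts with the man-optimal super-stable matching. By Lemma~\ref{lem:strict}, every matching $M_i$ that the algorithm outputs is super-stable and is a strict successor of $M_{i-1}$, which is exactly the condition imposed on consecutive terms of a maximal sequence. It then remains to argue that the algorithm halts and that the last matching it holds equals $M_z$. For termination: every output satisfies $M_{i-1} \succ M_i$, so the potential $\Phi(M) := \sum_{m \in A} rank_M(m)$ strictly increases at each output; since $\Phi$ is bounded above by $\sum_{m \in A}$ (length of $m$'s list), only finitely many matchings are emitted. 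For the endpoint: the outer \texttt{repeat} loop exits precisely when $rank_M(m) = rank_{M_z}(m)$ for every man $m$, which by Lemma~\ref{lem:super-stable-diff} forces $M = M_z$ (a tied-but-distinct pair $M(m) \neq M_z(m)$ at equal rank is ruled out by that lemma); and by Lemma~\ref{lem:necc}, as long as the current matching is not yet $M_z$ the algorithm produces a further output. Hence the algorithm cannot stop before reaching $M_z$, and since it does stop, the final matching it holds is $M_z$. Combining this with the previous paragraph, the emitted sequence $M_0 \succ M_1 \succ \cdots \succ M_z$ is a maximal sequence of super-stable matchings.

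I expect the main obstacle to be the correctness of the inner bookkeeping on which Lemma~\ref{lem:necc} rests: one must be certain that the edge-deletion steps in the body of the \texttt{repeat} loop (lines~\ref{line:remove-edge-1} and~\ref{line:remove-edge-2}, together with the updates of $E'$, $E_c$ and $E_d$) never discard an edge that is needed to certify the next strict successor dominated by $M$, so that after boundedly many passes of the \texttt{repeat} loop some sink strongly connected component $S$ admits a perfect matching in $E_c$ and the loop at line~\ref{line:terminate} actually fires rather than spinning without progress. This is exactly what Lemmas~\ref{lem:update}, \ref{lem:delete_dominate} and~\ref{lem:delete_equal} supply, and the write-up should invoke them to exclude the ``no progress'' scenario. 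A smaller point to record is that when such an $S$ is selected, $E_c \cap S$ is a perfect matching on the vertices of $S$ and, by Theorem~\ref{thm: super-stable-vertex}, $S$ meets $M$ in a matching on the same vertex set, so $M \leftarrow (E_c \cap S) \cup (M \setminus S)$ is again a matching on the full vertex set, making the update well defined.
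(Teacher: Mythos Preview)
Your proposal is correct and follows essentially the same approach as the paper: the paper's own proof is a one-line appeal to Lemma~\ref{lem:strict} and Lemma~\ref{lem:necc}, and your argument is built on exactly these two lemmas. Your write-up is more explicit than the paper's---you spell out the start at $M_0$, give a potential-function termination argument, and invoke Lemma~\ref{lem:super-stable-diff} to pin down the endpoint $M_z$---but the logical skeleton is the same.
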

\begin{proof}
By Lemma \ref{lem:strict} and Lemma \ref{lem:necc}, it is obvious that Algorithm \ref{alg:seq-super-stable} outputs a maximal sequence of super-stable matchings.
\end{proof}

\subsection{Running Time of Algorithm \ref{alg:seq-super-stable}}
\begin{thm}
The running time of Algorithm \ref{alg:seq-super-stable} is $O(mn)$.
\end{thm}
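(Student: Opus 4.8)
The plan is to split the running time into three parts and bound each separately: (a) scanning preference lists --- building the top-choice batches $E_m$, performing the deletions in lines \ref{line:remove_before}, \ref{line:remove-edge-1} and \ref{line:remove-edge-2}, and removing edges from $E'$; (b) maintaining the condensation of $G_d$, i.e.\ its strongly connected components together with their out-degrees; and (c) maintaining and testing the candidate matching $E_c\cap S$ on the components with no outgoing edge. The first thing I would record is a counting lemma: by Lemma \ref{lem:super-stable-diff}, each time a man's partner changes between two consecutive matchings of the output sequence it strictly drops by at least one position in his list, so over the whole run the number of (man, output) partner-change events is at most $\sum_m |\mathrm{list}(m)| \le m$; in particular the number $z$ of matchings produced is $O(m)$, and, since $E_c\cap S$ for an output component $S$ consists only of strictly-worse candidate edges (the edges $(m,M(m))$ with $M(m)=_m M_z(m)$ sit in singleton components), every output component is made up precisely of men whose partner changes at that output.

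For (a) I would argue $O(m)$: each man owns one pointer that only advances down his list; a batch $E_m$ removed from $E'$ is never reinserted (outputs only delete edges, never return anything to $E'$), so every edge of $G$ migrates into $E_d$ at most once; charging each deletion in lines \ref{line:remove-edge-1} and \ref{line:remove-edge-2} to the edge it removes makes all of (a) cost $O(m)$. For (c) I would observe that after the domination deletions of line \ref{line:remove-edge-1} and the lowest-rank deletions of line \ref{line:remove-edge-2}, $E_c$ restricted to a component with no outgoing arc is essentially already a matching, so testing whether it is perfect costs $O(|S|)$ per output and $O(\sum_i |S_i|) = O(m)$ overall, while the at most $m$ edge deletions that trigger the re-tries are already paid for in (a); in the one remaining situation where a genuine bipartite re-matching is needed --- a man with a tie among his candidate edges --- I would keep the matching up to date by augmenting paths and charge each augmentation to a partner-change event, which keeps (c) within $O(mn)$.

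The heart of the analysis, and the step I expect to be the main obstacle, is (b): keeping the condensation current without ever recomputing it from scratch. Between two consecutive outputs $G_d$ only grows, so I would maintain the condensation incrementally --- each new arc either keeps it acyclic (only the out-degree counter of one component changes) or closes a directed cycle of components that must be contracted into one; at an output, line \ref{line:update} deletes every non-matching arc inside the output component $S$, which shatters $S$ into $|S|$ singletons that are later re-merged. The amortisation I would use is: a vertex's component strictly grows each time the vertex takes part in a contraction, and it is reset to a singleton only at the $O(|\mathrm{list}(v)|)$ outputs whose component contains it, so $v$ takes part in $O(n\,|\mathrm{list}(v)|)$ contractions over the whole run, and $\sum_v n\,|\mathrm{list}(v)| = O(mn)$. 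The delicate part --- where I would spend most of the care --- is implementing each contraction (locating the offending cycle of components and merging their out-degree information) so that its cost is genuinely bounded by the total size it absorbs rather than by $m$, for instance via a bounded bidirectional search over the condensation, so that the $O(mn)$ budget above really covers all of this work; the out-degree recomputations forced by the $z=O(m)$ outputs contribute at most $\sum_m |\mathrm{list}(m)|^2 \le nm$ more, leaving the total at $O(mn)$.
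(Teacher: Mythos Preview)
Your decomposition into (a) list scanning, (b) SCC maintenance, and (c) testing $E_c$ on sink components matches the paper's structure exactly; both analyses identify (b) as the only place where anything beyond $O(m)$ can happen, and both handle the deletions at an output by observing that only the one output component is affected and that its surviving edges can be charged as fresh insertions (the paper phrases this as ``each edge can be added anew to $G_d$ a constant number of times'', you phrase it as ``a vertex is reset to a singleton $O(|\mathrm{list}(v)|)$ times'').

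The substantive difference is how the $O(mn)$ bound for incremental SCC maintenance itself is obtained. The paper does not prove it: it simply invokes the incremental algorithms of Pearce, Pearce--Kelly, and Marchetti-Spaccamela et al.\ (via \cite{kunysz2016characterisation}), which maintain strongly connected components in $O(mn)$ total time under edge insertions, and then argues that the constant-number-of-reinsertions observation lets that bound carry over. You instead try to derive the bound directly, via the component-growth amortisation. Your counting of contraction events is fine, but the step you yourself label ``delicate'' --- implementing each edge insertion so that the work to locate and merge the cycle of components is bounded by the size absorbed --- is precisely the non-trivial content of those cited papers; a bare ``bounded bidirectional search over the condensation'' does not give it, because an insertion that creates \emph{no} new cycle can still force you to traverse a large part of the condensation just to discover that fact, and those costs are what the topological-order machinery in the cited algorithms is there to control. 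So your argument has a real gap at exactly the point the paper closes by citation. If you want a self-contained proof, you would need to reproduce (or at least sketch) the topological-order maintenance of Marchetti-Spaccamela et al.; otherwise, citing it as the paper does is both shorter and safer.

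A minor remark: your part (c), with augmenting paths to re-match inside a tie, is more elaborate than necessary --- the paper dispatches everything outside the SCC maintenance in a single sentence as $O(m)$, and indeed after lines \ref{line:remove-edge-1}--\ref{line:remove-edge-2} the restriction of $E_c$ to a sink component is already a matching, so no augmentation is ever needed.
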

\begin{proof}
Each time we add new edges into $E_d$, we need to compute strongly connected components of $G_d$. Computing strongly connected component of any directed graph $G' = (V',E')$ can be done in $O(E)$ time. Each edge $e$ of $G$ is added to $G_d$ at most once, and $G_d$ is always a subgraph of $G$. Hence, a naive implementation takes $O(m^2)$ on computing strongly connected components of $G_d$. As mentioned in \cite{kunysz2016characterisation}, Pearce \cite{pearce2005some} and Pearce and Kelly \cite{pearce2003online} sketch how to extend their algorithm and that of Marchetti-Spaccamela et al. \cite{marchetti1996maintaining} to compute strongly connected component dynamically. Their algorithm runs in $O(mn)$ if edges can only be added to the graph and not deleted. The edges in $G_d$ can be deleted during the algorithm, but they are deleted only when $E_c$ is perfect on a strongly connected component without outgoing edges. Thus, other strongly connected components are unchanged. Also as mentioned in \cite{kunysz2016characterisation}, the edges remaining in the selected strongly connected component can be treated as they were added anew to the graph. Since the ranks of men increase as we output subsequent super-stable matchings, each edge can be added anew to $G_d$ constant number of times. Thus, the amortized cost of edge insertion remains unchanged. 
The reader can easily check the other part of the algorithm takes at most $O(m)$ time. Hence, the total time is $O(mn)$.
\end{proof}

\subsection{Rotation Poset}
We have shown all rotations $D(\mathcal{M}_G)$ can be found in time $O(mn)$ by Algorithm \ref{alg:seq-super-stable}. It remains to show how to efficiently construct the precedence ration $\prec$ on $D(\mathcal{M}_G)$. Our construction is essentially the same as the construction given in \cite{DBLP:books/daglib/0066875} for the classical stable marriage problem. The only difference here is that one rotation for super-stable matchings can be one or multiple cycles, while one rotation for stable matchings in the classical stable marriage problem is always a cycle.  For the completeness, we briefly sketch it. The reader can find more details in \cite{DBLP:books/daglib/0066875}.

Let $\rho = \{(m_1, w_1), \cdots, (m_{k-1}, w_{k-1})\}$ be a rotation. Each rotation corresponds to the symmetric difference of two consecutive super-stable matchings in a maximal chain of $\mathcal{M}_G$. We say $\rho$ is exposed to a super-stable matching $M$ if $\rho \in M$ and we can eliminate it to obtain another super-stable matching $M' = M \oplus \rho$.

We say that $\rho$ moves $m_i$ down to $w_{i+1}$ and moves $w_i$ up to $m_{i-1}$ for each $i \in [k]$. We also say $\rho$ moves $m_i$ below $w$ if $w_{i-1} \prec_{m_i} w \preceq_{m_i} w_i$ and moves $w_i$ above $m$ if $m_i \preceq_{w_i} m \prec_{w_i} m_{i-1}$.

Let us consider a directed graph $(D(\mathcal{M}_G), E)$, $E$ contains two types of edges:
\begin{itemize}
    \item Type 1: For each pair $(m,w) \in \rho$, if $\rho'$ is the unique rotation that moves $m$ down to $w$, then $(\rho', \rho) \in E$.
    \item Type 2: If $\rho$ moves $m$ below $w$ and $\rho' \neq \rho$ is the unique rotation that moves $w$ above $m$, then $(\rho', \rho) \in E$.
\end{itemize}
The algorithm to construct the set $E$ is simple. For each rotation $\rho$ and each pair $(m_i,w_i) \in \rho$, label $w_i$ in $m_i$'s preference list with a type 1 label of $\rho$, and for each $m$ strictly between $m_i$ and $m_{i-1}$ on $w_i$'s list, label $w_i$ in $m$'s preference list with a type 2 label of $\rho$. Then, traverse each woman $w$ on $m$'s preference list, set $\rho^* = \emptyset$. If $w$ has a type 1 label of $\rho$, add $(\rho^*, \rho)$ into $E$ and set $\rho^* = \rho$. If $w$ has a type 2 label of $\rho$, add $(\rho,\rho^*)$ into $E$. The algorithm takes $O(m)$ time since it only traverse the preference lists once.

It turns out that the transitive closure of $(D(\mathcal{M}_G), E)$ is exactly $(D(\mathcal{M}_G), \prec)$. The reader can find the proof in Irving and Gusfield's book \cite{DBLP:books/daglib/0066875}.

We summarize Section \ref{sec:sequence} with the following theorem.
\begin{thm}
The partial order $(D(\mathcal{M}_G), \prec)$ can be constructed in $O(mn)$.
\end{thm}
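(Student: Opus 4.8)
The plan is to assemble the two halves of this section. \textbf{Finding the rotations.} By the theorem that Algorithm~\ref{alg:seq-super-stable} computes a maximal sequence $M_0\succ M_1\succ\cdots\succ M_z$ of super-stable matchings in $O(mn)$ time, the consecutive symmetric differences $\rho_i=M_{i-1}\oplus M_i$ are obtained within the same bound; by the Birkhoff-type theorems recalled above these $\rho_i$ are exactly the minimal differences of the ring of sets $\mathcal{M}_G$, so $D(\mathcal{M}_G)$ together with, for every $\rho$ and every $(m,w)\in\rho$, the rank of $w$ on $m$'s list and of $m$ on $w$'s list, is available after $O(mn)$ work. \textbf{Building the precedence digraph.} It then remains to run the Type~1/Type~2 labelling procedure described above and output the digraph $(D(\mathcal{M}_G),E)$, whose transitive closure is $(D(\mathcal{M}_G),\prec)$. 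I would point out that this digraph is the intended representation of the poset: $\prec$ itself can have $\Theta(z^2)$ comparable pairs, so ``constructing $(D(\mathcal{M}_G),\prec)$'' must mean producing a sparse object from which $\prec$ is recoverable by transitive closure, exactly as in Gusfield--Irving. Since the labelling runs in $O(m)$ and $O(m)\le O(mn)$, the two halves together yield the bound.

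For the $O(m)$ cost of the labelling I would record two monotonicity facts as small claims. Since every rotation moves each of its men to a \emph{strictly} worse partner --- equally good partners being identical by Lemma~\ref{lem:super-stable-diff} --- and men's partners move monotonically along the chain, a man $m$ lies in at most $|L(m)|$ rotations, where $|L(m)|$ denotes the length of $m$'s list; hence $\sum_{\rho}|\rho|\le\sum_{m\in A}|L(m)|=m$, which bounds both the number of Type~1 labels and the outer loop of the labelling. Dually, each woman's partner improves strictly along the chain, so for two distinct rotations that move $w$ up, the sets of men lying between $w$'s old and new partners are disjoint sub-intervals of $w$'s list; summing over rotations, the number of Type~2 labels is at most $\sum_{w\in B}|L(w)|=m$. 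Assembling $E$ is then one further left-to-right pass over all preference lists, so the labelling is $O(m)$ and in particular $|E|=O(m)$.

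The one place that is not immediate is the \emph{correctness} of the digraph in the presence of ties, because --- unlike the classical case --- a super-stable rotation may be a union of several vertex-disjoint alternating cycles rather than a single cycle. I would argue this does not disturb the Gusfield--Irving proof that the transitive closure of $(D(\mathcal{M}_G),E)$ equals $\prec$: every ingredient of that proof is stated \emph{per (man,woman) pair} (``$\rho$ moves $m$ down to $w$'', ``$\rho$ moves $m$ below $w$'', ``$\rho$ moves $w$ above $m$'' each concern a single list entry), and the well-definedness facts underlying the two edge types --- that a given $m$ is moved down to a given $w$ by a unique rotation, and a given $w$ is moved above a given $m$ by a unique rotation --- hold here because a vertex's partner passes through each list entry at most once along the maximal chain and Lemma~\ref{lem:super-stable-diff} rules out silent ties. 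Checking these uniqueness statements carefully is the main obstacle; everything else is the time accounting above.
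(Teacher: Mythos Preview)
Your proposal is correct and follows essentially the same approach as the paper: compute $D(\mathcal{M}_G)$ in $O(mn)$ via Algorithm~\ref{alg:seq-super-stable}, then build the Type~1/Type~2 precedence digraph in $O(m)$. The paper's own proof is a two-line summary of exactly this, deferring the $O(m)$ bound to ``the algorithm only traverses the preference lists once'' and the correctness of the digraph to Gusfield--Irving; your explicit counting of Type~1 and Type~2 labels and your remark that multi-cycle rotations do not disturb the per-pair Gusfield--Irving argument are additional detail rather than a different route.
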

\begin{proof}
The construction of $D(\mathcal{M}_G)$ takes $O(mn)$ time by running Algorithm \ref{alg:seq-super-stable}. The precedence relation can be constructed in $O(m)$ time. Hence, the time complexity is $O(mn)$. 
\end{proof}

\section{The Super-stable Matching Polytope}
In this section, we give a polyhedral characterisation of the set of all super-stable matchings and prove that the super-stable matching polytope is integral. The main result is the following theorem.
\begin{thm}\label{thm:susm}
Let $G = (V,E)$ be a stable matching problem with ties where the graph $G$ is bipartite, then the super-stable matching polytope $SUSM(G)$ is described by the following linear system:

\begin{subequations}
\begin{align}
    & \sum_{u \in N(v)} x_{u,v} \leq 1, & & \forall v \in V,\label{eq:susm-a}\\
    & \sum_{i >_u v} x_{u,i} + \sum_{j >_v u} x_{j,v} + x_{u,v} \geq 1, & & \forall (u,v) \in E,\label{eq:susm-b}\\
    & x_{u,v} \geq 0, & & \forall (u,v) \in E \label{eq:susm-c}
\end{align}
\end{subequations}
where $N(v)$ denotes the set of neighbors of $v$ in $G$, and $w>_u v$ means $u$ prefers $w$ to $v$.
\end{thm}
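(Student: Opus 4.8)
Write $P$ for the polyhedron defined by \eqref{eq:susm-a}--\eqref{eq:susm-c}. Since $P$ is rational and bounded ($P\subseteq[0,1]^E$), proving $SUSM(G)=P=\operatorname{conv}\{\chi^M:M\in\mathcal{M}_G\}$, where $\chi^M\in\{0,1\}^E$ is the incidence vector of $M$, gives integrality, and it splits into: (i) $\chi^M\in P$ for every super-stable $M$; (ii) every integral point of $P$ is $\chi^M$ for a super-stable $M$; and (iii) $P$ has no non-integral vertex, equivalently $P\subseteq\operatorname{conv}\{\chi^M\}$. Parts (i) and (ii) are essentially a restatement of the definition of a blocking edge: \eqref{eq:susm-a} and \eqref{eq:susm-c} together with integrality say that $x=\chi^M$ for a matching $M$, and for $(u,v)\notin M$ the inequality \eqref{eq:susm-b} holds for $\chi^M$ exactly when $u$ is matched to a woman it strictly prefers to $v$ or $v$ is matched to a man it strictly prefers to $u$ — i.e. exactly when $(u,v)$ is not blocking; hence \eqref{eq:susm-b} holds for all $(u,v)$ iff $M$ is super-stable, which covers both (i) and (ii).

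For (iii) the plan is to adapt the quantile decomposition of Teo and Sethuraman \cite{teo1998geometry}. Given $x^*\in P$ and $t\in(0,1]$, match each man $m$ to the first woman $\ell_m(t)$ at which his cumulative weight reaches $t$ when his list is read from the top (leaving $m$ unmatched if $\sum_w x^*_{m,w}<t$), and each woman $w$ to the first man $r_w(t)$ at which her cumulative weight reaches $t$ when her list is read from the bottom; let $M_t$ be the resulting edge set. Rewriting \eqref{eq:susm-b} as $\sum_{i\succeq_u v}x^*_{u,i}+\sum_{j\succeq_v u}x^*_{j,v}\ge 1+x^*_{u,v}$, one checks that in the strict case a blocking pair $(u,v)$ of $M_t$ would force $\sum_{i\succeq_u v}x^*_{u,i}<t$ and $\sum_{j\succeq_v u}x^*_{j,v}\le 1-t$, summing to strictly less than $1$ — a contradiction. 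Together with the facts that $M_t$ is a matching for all but finitely many $t$, that only finitely many distinct $M_t$ occur, and that $x^*=\int_0^1\chi^{M_t}\,dt$, this writes $x^*$ as a convex combination of super-stable matchings.

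The main obstacle is ties: the sharp inequality $\sum_{i\succeq_u v}x^*_{u,i}<t$ fails exactly when the blocking candidate $v$ lies in the tie containing $\ell_u(t)$ in $u$'s list (and symmetrically on $w$'s side), where it degrades to $\ge t$, so super-stability of $M_t$ can no longer be read off \eqref{eq:susm-b} directly. Handling this is where Lemma \ref{lem:super-stable-diff} has to be lifted from matchings to the polytope: the key auxiliary claim I would establish is that for every $x^*\in P$ no man receives positive weight on two women that are tied in his list, and dually for women; the proof propagates \eqref{eq:susm-b} up the preference lists (an agent at the top of a list forced to carry weight $1$ contradicts the capacity constraint \eqref{eq:susm-a}), which terminates by finiteness. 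Granting this, $\ell_m(t)$ and $r_w(t)$ never sit in the interior of a tie, which makes $M_t$ well-defined and lets the blocking analysis go through in the tie cases as well, while Theorem \ref{thm: super-stable-vertex} takes care of the unmatched/incomplete-list bookkeeping.

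An alternative finish, which reduces to essentially the same tie-bookkeeping, is a direct no-fractional-vertex argument: given a fractional vertex $x^*$, build a perturbation $x^*\pm\varepsilon d$ that keeps all tight constraints tight by letting $d$ alternate $\pm 1$ along a circuit of the fractional support of $x^*$ chosen to be monotone with respect to the preferences — the analogue, for that support, of the sign pattern of an alternating cycle of $M\oplus N$ — and extract such a circuit by tracing preference lists much as Algorithm \ref{alg:seq-super-stable} does; the obstacle there is the existence of the monotone circuit. One may also mirror Kunysz's argument \cite{kunysz2018algorithm} for the strongly stable matching polytope. Finally, as a sanity check rather than a proof, note that a matching is super-stable in $G$ iff it is stable in every tie-breaking of $G$, so $P$ equals the intersection, over all tie-breakings $G'$, of the (integral \cite{vate1989linear,rothblum1992characterization}) classical stable matching polytopes of $G'$; this alone does not yield integrality of the intersection, which is precisely why one of the arguments above is still needed.
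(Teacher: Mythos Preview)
Your handling of parts (i)--(ii) is fine. The main route you propose, the Teo--Sethuraman quantile decomposition, is genuinely different from what the paper does; the paper instead takes essentially your ``alternative finish'' (a perturbation argument), but carries it out much more directly than your sketch suggests.

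The real gap is the auxiliary claim you lean on: that for every feasible $x$ no agent puts positive weight on two tied neighbours. The claim is true (it is the polyhedral analogue of Lemma~\ref{lem:super-stable-diff}, and follows \emph{a posteriori} from the theorem), but your proposed justification does not prove it. Concretely, starting from $x_{u,v_1},x_{u,v_2}>0$ with $v_1=_u v_2$, combining \eqref{eq:susm-b} for $(u,v_1)$ with \eqref{eq:susm-a} for $u$ only yields $\sum_{j>_{v_1}u}x_{j,v_1}\ge x_{u,v_2}>0$, i.e.\ $v_1$ has support strictly above $u$; there is no mechanism forcing anyone's total weight to exceed~$1$, so the ``propagation terminates in a contradiction with \eqref{eq:susm-a}'' step does not go through. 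Without this lemma your quantile matchings $M_t$ need not be super-stable precisely in the tie ranges you flagged, so the decomposition argument is incomplete.

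The paper avoids this altogether: it never proves the full ``no ties in support'' statement, only the weaker fact that the \emph{top} and \emph{bottom} of each agent's support are unique. The key lemma is that if $v$ is maximal in $\{i:x_{u,i}>0\}$, then $u$ is the unique minimal element of $\{j:x_{j,v}>0\}$ and $\sum_j x_{j,v}=1$; this drops out of \eqref{eq:susm-b} for $(u,v)$ since the first sum vanishes. A double-counting argument ($|F_W(x)|\ge |U\cap V^+|$ versus $|F_W(x)|=\sum_{w\in F_W(x)}\sum_j x_{j,w}\le |U\cap V^+|$) then forces $|N^*(u)|=1$ for every $u$, so one can set $M=\{(u,x^*(u))\}$ and, by symmetry, $N=\{(x^*(w),w)\}$. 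The perturbation $x\pm\varepsilon(\chi^M-\chi^N)$ is then checked to preserve all tight constraints, so any extreme point has $M=N$, hence $x=\chi^M$. This is your ``monotone circuit'' idea with the circuit-finding replaced by an explicit construction of $M$ and $N$; the point you were missing is that the two matchings can be read off directly from the extremes of each agent's support once uniqueness is established by counting.
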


\begin{proof}
Let $x$ be a feasible solution. Define $E^+$ to be the set of edges $(u,v)$ with $x_{u,v} > 0$, and $V^+$ the set of vertices covered by $E^+$. For each $u \in V^+$, let $N^*(u)$ be the maximal elements in $\{i : x_{u,i} > 0\}$. Note that there might be multiple maximal elements that form a tie. 

We first show the following lemma.
\begin{lem}\label{lem:susm-equal}
For each vertex $u$ and each vertex $v \in N^*(u)$, then $u$ is the unique minimal element in $\{j : x_{j,v} > 0\}$ and that $\sum_{j \in N(v)} x_{j,v} = 1$. 
\end{lem}

\begin{proof}
Indeed, (\ref{eq:susm-b}) implies
\begin{equation}\label{eq:super-equal}
    1 \leq \sum_{j >_v u} x_{j,v} + x_{u,v} = \sum_{j \in N(v)} x_{j,v} - \sum_{j <_v u} x_{j,v} - \sum_{\substack{j =_v u;\\j \neq u}} x_{j,v} \leq 1 - \sum_{j <_v u} x_{j,v} - \sum_{\substack{j =_v u;\\j \neq u}} x_{j,v} \leq 1
\end{equation}
Hence we have equality throughout in (\ref{eq:super-equal}). This implies that $x_{j,v} = 0$ for each $\{j : j <_v u\}$ and each $\{j : j =_v u; j \neq u\}$ and that $\sum_{j \in N(v)} x_{j,v} = 1$. Since $x_{j,v} = 0$ for each $\{j : j =_v u; j \neq u\}$, $v$ strictly prefers any other vertices in $\{j : x_{j,v} > 0\}$ over $u$, making $u$ the unique minimal element in $\{j : x_{j,v} > 0\}$.
\end{proof}

We then prove that for any $v$ such that $v \in N^*(u)$ for some $u$, then $u$ is unique. Suppose not, there is a vertex $u' \neq u$ and $v \in N^*(u')$. By Lemma \ref{lem:susm-equal}, $u$ is the unique minimal element in $\{j : x_{j,v} > 0\}$, and $u'$ is the unique minimal element in $\{j : x_{j,v} > 0\}$, contradiction.

Now let $U$ and $W$ be the color classes of $G$. For any $u \in U \cap V^+$, there is at least one unique vertex $w \in N^*(u)$, such that $\sum_{j \in N(w)} x_{j,w} = 1$. Let $F_W(x)$ be the set of these vertices. Formally, $F_W(x) = \{w : w \in N^*(u), u \in U \cap V^+\}$. Then we have $|F_W(x)| \geq |U \cap V^+|$. We also have that

\begin{equation}
    |F_W(x)| = \sum_{w \in F_W(x)} \sum_{j \in N(w)} x_{j,w} = \sum_{j \in U \cap V^+} \sum_{w \in F_W(x)} x_{j,w} \leq \sum_{j \in U \cap V^+} 1 = |U \cap V^+|
\end{equation}
implying that $|F_W(x)| = |U \cap V^+|$. Hence, we conclude that for each $u \in U \cap V^+$, $|N^*(u)| = 1$, which implies that $u$ has an unique maximal element in $\{i : x_{u,i} > 0\}$. Since $|N^*(u)| = 1$, we denote this unique vertex as $x^*(u)$. We then have the following corollary.
\begin{cor}
There is a bijection between $U \cap V^+$ and $F_W(x)$, and for each $u \in U \cap V^+$, $\sum_{i \in N(u)} x_{u,i} = 1$.
\end{cor}

Similarly, we may define $F_U(x) = \{u : u \in N^*(w), w \in W \cap V^+\}$ and we have
\begin{cor}
There is a bijection between $W \cap V^+$ and $F_U(x)$, and for each $w \in W \cap V^+$, $\sum_{j \in N(w)} x_{j,w} = 1$.
\end{cor}

Then we have $|U \cap V^+| = |F_W(x)| \leq |W \cap V^+|$ and $|W \cap V^+| = |F_U(x)| \leq |U \cap V^+|$, implying $|U \cap V^+| = |W \cap V^+| = |F_W(x)| = |F_U(x)|$. Then any $u \in U \cap V^+$ is also in $F_U(x)$, hence, $u$ has an unique minimal element, denoted by $x_*(u)$.

The bijection between $U \cap V^+$ and $F_W(x)$ forms a perfect matching $M$ in $(V^+, E^+)$, i.e. the set of edges $\{(u, x^*(u)) : u \in U \cap V^+\}$. Similarly, the bijection between $W \cap V^+$ and $F_U(x)$ forms another perfect matching $N$, i.e. the set of edges $\{(x^*(w), w) : w \in W \cap V^+\}$. 

Consider the vector $x' = x + \varepsilon\chi^M - \varepsilon\chi^N$, with $\varepsilon$ close enough to $0$ (positive or negative). we will show that $x'$ is also feasible solution of (\ref{eq:susm-a})-(\ref{eq:susm-c}). It is easy to see that $x'$ satisfies (\ref{eq:susm-a}) and (\ref{eq:susm-c}). For each vertex $u \in U \cap V^+$, there is an unique maximal element $x^*(u)$ and $(u, x^*(u)) \in M$ and an unique minimal element $x_*(u)$ and $(u, x_*(u)) \in N$, implying $\sum_{i \in N(u)} x'_{u,i} = \sum_{i \in N(u)} x_{u,i} \leq 1$. To see that $x'$ satisfies (\ref{eq:susm-b}), let $(u,v)$ be an edge in $E^+$ attaining equality in (\ref{eq:susm-b}). The case that $(u,v) \in M$ or $(u,v) \in N$ is trivial. So assume that $(u,v) \notin M$ and $(u,v) \notin N$. The edge $(u, x^*(u)) \in M$ and $x^*(u) >_u v$. There is no other edge in $\{(u, i): i \in N(u)\}$ belongs to $M$. We prove that there is no edge $(j,v)$ in $M$ and $j >_v u$ since if $(j,v) \in M$, $j$ is the minimal element of $v$. Similarly, we can prove that there is exact one edge $(j,v) \in N$ and $j >_v u$. Concluding, $\sum_{i >_u v} x'_{u,i} + \sum_{j >_v u} x'_{j,v} + x'_{u,v} = \sum_{i >_u v} x_{u,i} + \sum_{j >_v u} x_{j,v} + x_{u,v} = 1$. Let $x$ be an extreme point. The feasibility of $x'$ implies that $\chi^M = \chi^N$, that is, $M = N$. So $E^+ = M$ since the maximal element is the same as the minimal element for each vertex, hence, $x = \chi^M$.  
\end{proof}

\subsection{Self-Duality}
Let's consider the linear program:
\begin{maxi!}
    {}{\sum_{(u,v) \in E} x_{u,v}}{\label{lp:primal-susm}}{(\bf{LP})}
\addConstraint{x \in FSUSM(G,P).}
\end{maxi!}

The dual problem with variables $(\alpha, \beta) \in R^V \times R^E$, is given by

\begin{mini!}
    {}{\sum_{v \in V} \alpha_v - \sum_{(u,v) \in E} \beta_{u,v}}{\label{lp:dual-susm}}{(\bf{DLP})}
    \addConstraint{\alpha_u + \alpha_v - \sum_{i <_u v} \beta_{u,i} - \sum_{j <_v u} \beta_{v,j} - \beta_{u,v}}{\geq 1,}{\h\forall (u,v) \in E,}
    \addConstraint{\alpha_v}{ \geq 0,}{\h\forall v \in V,}
    \addConstraint{\beta_{u,v}}{ \geq 0,}{\h\forall (u,v) \in E.}
\end{mini!}

\begin{lem}[\bf{Self-Duality}]
Each $x \in FSUSM(G,P)$ is an optimal solution of $(\bf{LP})$ and $(\alpha, x)$ is an optimal solution of $(\bf{DLP})$, where
\begin{equation}\label{eq:super-nb}
    \alpha_v = \sum_{i \in N(v)} x_{v,i} \h \forall v \in V.
\end{equation}
\end{lem}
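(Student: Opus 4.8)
The plan is to prove the self-duality lemma by establishing primal feasibility, dual feasibility, and complementary slackness, then invoking strong LP duality. First I would note that for any $x \in FSUSM(G,P)$, primal feasibility of $(\mathbf{LP})$ is immediate since $FSUSM(G,P)$ is exactly the feasible region defined by \eqref{eq:susm-a}--\eqref{eq:susm-c}. Next I would check that $(\alpha, \beta) = (\alpha, x)$ with $\alpha_v = \sum_{i \in N(v)} x_{v,i}$ is feasible for $(\mathbf{DLP})$: the constraints $\alpha_v \geq 0$ and $\beta_{u,v} = x_{u,v} \geq 0$ hold trivially by \eqref{eq:susm-c}, so the work is in verifying the first dual constraint. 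Substituting the definition of $\alpha$, the left-hand side becomes $\sum_{i \in N(u)} x_{u,i} + \sum_{j \in N(v)} x_{v,j} - \sum_{i <_u v} x_{u,i} - \sum_{j <_v u} x_{v,j} - x_{u,v}$, and after cancelling the terms below $v$ in $u$'s list (and symmetrically) this telescopes down to $\sum_{i \succeq_u v} x_{u,i} + \sum_{j \succeq_v u} x_{v,j} - x_{u,v}$, which one rewrites as $\sum_{i >_u v} x_{u,i} + \sum_{j >_v u} x_{v,j} + x_{u,v} + (\text{tie terms})$; since this is $\geq 1$ by \eqref{eq:susm-b} plus the nonnegative tie contributions, the dual constraint holds.

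Then I would verify complementary slackness between $x$ (primal) and $(\alpha, x)$ (dual). There are two families of conditions. For the primal variable $x_{u,v} > 0$: the corresponding dual constraint must be tight, i.e. $\alpha_u + \alpha_v - \sum_{i <_u v}\beta_{u,i} - \sum_{j <_v u}\beta_{v,j} - \beta_{u,v} = 1$. This is where I would lean on the structural analysis already carried out in the proof of Theorem~\ref{thm:susm}: when $x_{u,v} > 0$, the edge $(u,v)$ lies in $E^+$, and the dual constraint's left side equals $\sum_{i \succeq_u v} x_{u,i} + \sum_{j \succeq_v u} x_{v,j} - x_{u,v}$; the tie terms vanish because of Lemma~\ref{lem:susm-equal}-type reasoning (no two positive-weight edges at a vertex can be tied in the relevant position once one is minimal), and the remaining sum equals $1$ exactly by the tightness of \eqref{eq:susm-b} established for edges in the support. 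For the dual variable $\beta_{u,v} = x_{u,v} > 0$: the corresponding primal constraint \eqref{eq:susm-b} must be tight for $(u,v)$, which again follows from the support analysis in the main proof. Finally, the slackness condition associated with $\alpha_v$ being possibly positive requires \eqref{eq:susm-a} tight at every $v$ with $\alpha_v > 0$; but $\alpha_v = \sum_{i\in N(v)} x_{v,i} > 0$ means $v \in V^+$, and the corollaries in the main proof give $\sum_{i \in N(v)} x_{v,i} = 1$ precisely for such $v$.

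Having primal feasibility, dual feasibility, and complementary slackness, strong duality lets me conclude that $x$ is optimal for $(\mathbf{LP})$ and $(\alpha, x)$ is optimal for $(\mathbf{DLP})$, which is the statement. I would remark that the objective values coincide: the primal value is $\sum_{(u,v)\in E} x_{u,v}$ and the dual value is $\sum_{v} \alpha_v - \sum_{(u,v)} \beta_{u,v} = \sum_v \sum_{i\in N(v)} x_{v,i} - \sum_{(u,v)} x_{u,v} = 2\sum_{(u,v)} x_{u,v} - \sum_{(u,v)} x_{u,v} = \sum_{(u,v)} x_{u,v}$, giving a direct check that bypasses even invoking strong duality if one prefers.

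The main obstacle I anticipate is the complementary-slackness verification for the edges in the support: showing that the tie terms genuinely vanish and that \eqref{eq:susm-b} is tight on every edge of $E^+$. This is not a routine algebraic manipulation — it relies on the global combinatorial structure (the bijections $U \cap V^+ \leftrightarrow F_W(x)$ and $W \cap V^+ \leftrightarrow F_U(x)$, the uniqueness of minimal and maximal elements at each covered vertex) developed in the proof of Theorem~\ref{thm:susm}. I would therefore organise the argument to quote those intermediate facts explicitly rather than re-deriving them, so that the self-duality proof reduces to assembling the three LP-duality ingredients cleanly.
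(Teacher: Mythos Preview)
Your closing remark \emph{is} the paper's entire proof: verify dual feasibility of $(\alpha,x)$ by the telescoping computation you describe, then observe that the dual objective equals
\[
\sum_{v\in V}\alpha_v-\sum_{(u,v)\in E}x_{u,v}=2\sum_{(u,v)\in E}x_{u,v}-\sum_{(u,v)\in E}x_{u,v}=\sum_{(u,v)\in E}x_{u,v},
\]
and conclude by weak duality. That is all the paper does, and it requires nothing from the proof of Theorem~\ref{thm:susm}.

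Your main route through complementary slackness is a detour with a real gap. To check CS you need, for every $(u,v)\in E^+$, both that \eqref{eq:susm-b} is tight and that the tie contributions $\sum_{i=_u v,\,i\neq v}x_{u,i}+\sum_{j=_v u,\,j\neq u}x_{j,v}$ vanish. The structural facts proved inside Theorem~\ref{thm:susm} (Lemma~\ref{lem:susm-equal}, the two bijections, and $\sum_{i\in N(v)}x_{v,i}=1$ on $V^+$) concern only the \emph{extremal} support edges: they give uniqueness of $x^*(u)$ and $x_*(u)$ and force tightness at those particular edges, but they say nothing directly about intermediate support edges, nor do they rule out tied pairs sitting strictly between the maximum and the minimum in some vertex's support. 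So the sentence ``the remaining sum equals $1$ exactly by the tightness of \eqref{eq:susm-b} established for edges in the support'' appeals to a fact the main proof never establishes. The CS conditions do hold---they follow \emph{a~posteriori} once optimality is known---but verifying them independently of the objective-value equality would require additional work you have not sketched, and trying to quote them from optimality is circular.

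Promote the remark to the body of the proof and drop the CS machinery; you will match the paper exactly and avoid the gap.
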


\begin{proof}
Let $x \in FSUSM(G,P)$ and let $\alpha$ be defined by (\ref{eq:super-nb}). Let $(u,v) \in E$, we have that

\begin{align*}
    & \alpha_u + \alpha_v - \sum_{i <_u v} x_{u,i} - \sum_{j <_v u} x_{v,j} - x_{u,v}\\
   =& \sum_{i \in N(u)} x_{u,i} + \sum_{j \in N(v)} x_{v,j} - \sum_{i <_u v} x_{u,i} - \sum_{j <_v u} x_{v,j} - x_{u,v}\\
   =& \sum_{i >_u v} x_{u,i} + \sum_{\substack{i =_u v;\\i \neq v}} x_{u,i} + \sum_{j >_v u} x_{v,j} + \sum_{\substack{j =_v u;\\j \neq u}} x_{v,j} + x_{u,v}\\
   \geq& \sum_{i >_u v} x_{u,i} + \sum_{j >_v u} x_{v,j} + x_{u,v}\\
   \geq& 1,
\end{align*}

where the last two inequalities hold since $x$ satisfies (\ref{eq:susm-c}) and (\ref{eq:susm-b}). Hence $(\alpha, x)$ is feasible for $(DLP)$. To see that $x$ and $(\alpha, x)$ are optimal solutions of $(LP)$ and $(DLP)$, respectively. Note that
$$\sum_{v \in V} \alpha_v - \sum_{(u,v) \in E} x_{u,v} = 2 \sum_{(u,v) \in E} x_{u,v} - \sum_{(u,v) \in E} x_{u,v} = \sum_{(u,v) \in E} x_{u,v},$$
so the objective function in $(LP)$ and that in $(DLP)$ are equal. Thus, the weak duality theorem of linear programming implies that $x$ is optimal for $(LP)$ and $(\alpha, x)$ is optimal for $(DLP)$.
\end{proof}

\subsection{Partial Order Preference Lists}
Partial order preference lists are generalisation of preference lists with ties in such a way that the preference list of each man or woman is an arbitrary partial order. It turns out that the linear system (\ref{eq:susm-a})-(\ref{eq:susm-c}) can also describe the set of all super-stable matchings with partial order preference list. 
\begin{proof}
Let $x$ be a feasible solution. Define $E^+$ to be the set of edges $(u,v)$ with $x_{u,v} > 0$, and $V^+$ the set of vertices covered by $E^+$. For each $u \in V^+$, let $N^*(u)$ be the maximal elements in $\{i : x_{u,i} > 0\}$. Note that there might be multiple maximal elements that are incomparable to each other. 

The following lemma is an analogue to Lemma \ref{lem:susm-equal}.
\begin{lem}
For each vertex $u$ and each vertex $v \in N^*(u)$, then $u$ is the unique minimal element in $\{j : x_{j,v} > 0\}$ and that $\sum_{j \in N(v)} x_{j,v} = 1$. 
\end{lem}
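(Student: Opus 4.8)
The plan is to follow the proof of Lemma~\ref{lem:susm-equal} line for line, replacing the role of the tie class of $u$ by the class of neighbours of $v$ that are \emph{incomparable} to $u$ in $v$'s partial order. Fix $u$ and $v \in N^*(u)$. Since $v\in N^*(u)$ we have $x_{u,v}>0$, and, crucially, $v$ is a maximal element of the partial order restricted to $\{i : x_{u,i}>0\}$; hence there is no $i$ with $i >_u v$ and $x_{u,i}>0$, so $\sum_{i>_u v} x_{u,i}=0$ and constraint (\ref{eq:susm-b}) for the edge $(u,v)$ collapses to $\sum_{j>_v u} x_{j,v} + x_{u,v}\ge 1$.

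Next I would partition $N(v)$ into the four classes $\{j : j>_v u\}$, $\{u\}$, $\{j : j<_v u\}$, and $\{j : j\bowtie_v u,\ j\neq u\}$, the last being the neighbours incomparable to $u$ on $v$'s list, and run the chain
\[
1 \;\le\; \sum_{j>_v u} x_{j,v} + x_{u,v} \;=\; \sum_{j\in N(v)} x_{j,v} \;-\; \sum_{j<_v u} x_{j,v} \;-\; \sum_{\substack{j\bowtie_v u\\ j\neq u}} x_{j,v} \;\le\; 1 \;-\; \sum_{j<_v u} x_{j,v} \;-\; \sum_{\substack{j\bowtie_v u\\ j\neq u}} x_{j,v} \;\le\; 1,
\]
where the first middle inequality is constraint (\ref{eq:susm-a}) at $v$ and the remaining sums are nonnegative by (\ref{eq:susm-c}). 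Equality throughout then forces $x_{j,v}=0$ for every $j<_v u$ and for every $j$ incomparable to $u$ with $j\neq u$, and forces $\sum_{j\in N(v)} x_{j,v}=1$.

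Finally I would read off the conclusion. The only neighbours $j$ of $v$ with $x_{j,v}>0$ are $u$ itself and those with $j>_v u$; thus $u$ is comparable to, and below, every other element of $\{j : x_{j,v}>0\}$, so it is the unique minimum — in particular the unique minimal element — of that set, and the weighted degree $\sum_{j\in N(v)} x_{j,v}$ equals $1$.

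I expect no genuine obstacle: the structure is identical to Lemma~\ref{lem:susm-equal}. The one place needing care is the bookkeeping of the partition of $N(v)$ — in the ties model the leftover class is the tie of $u$, here it is the incomparability class of $u$ — and checking that the equality case still annihilates the weight on the incomparable competitors, so that $u$ is genuinely the unique minimum rather than merely one of several minimal elements. Once this lemma and its colour-swapped twin are in hand, the remainder of the partial-order argument (the counting step yielding the matchings $M$ and $N$, and the perturbation $x' = x + \varepsilon\chi^M - \varepsilon\chi^N$) carries over verbatim from the proof of Theorem~\ref{thm:susm}, since it uses only the conclusions of these two lemmas.
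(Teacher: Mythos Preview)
Your proposal is correct and follows essentially the same argument as the paper: the paper also collapses constraint~(\ref{eq:susm-b}) using maximality of $v$, partitions $N(v)$ into $\{j>_v u\}$, $\{u\}$, $\{j<_v u\}$, and the incomparability class (written $j\parallel_v u$ there rather than your $j\bowtie_v u$), runs the same chain of inequalities squeezed between $1$ and $1$, and reads off that the incomparable and strictly-lower weights vanish. Your write-up is, if anything, slightly more explicit in justifying why $\sum_{i>_u v} x_{u,i}=0$ and in noting that uniqueness of the minimum (not just minimality) follows from the vanishing of the incomparable class.
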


\begin{proof}
Indeed, (\ref{eq:susm-b}) implies
\begin{equation}
    1 \leq \sum_{j >_v u} x_{j,v} + x_{u,v} = \sum_{j \in N(v)} x_{j,v} - \sum_{j <_v u} x_{j,v} - \sum_{j \parallel_v u} x_{j,v} \leq 1 - \sum_{j <_v u} x_{j,v} - \sum_{j \parallel_v u} x_{j,v} \leq 1
\end{equation}, where $w \parallel_u v$ means $w$ is incomparable with $v$ in $u$'s preference list.
Hence we have equality throughout in (\ref{eq:super-equal}). This implies that $x_{j,v} = 0$ for each $\{j : j <_v u\}$ and each $\{j : j \parallel_v u\}$ and that $\sum_{j \in N(v)} x_{j,v} = 1$. Since $x_{j,v} = 0$ for each $\{j : j \parallel_v u\}$, $v$ strictly prefers any other vertices in $\{j : x_{j,v} > 0\}$ over $u$, making $u$ the unique minimal element in $\{j : x_{j,v} > 0\}$.
\end{proof}
The rest of the proof is essentially the same as in Theorem \ref{thm:susm}.
\end{proof}

\subsection{The Strongly Stable Matching Polytope}
Kunysz \cite{kunysz2018algorithm} gives a linear system that characterizes the set of all strongly stable matchings and proves this linear system is integral using the duality theory of linear programming. Here, we give an alternate and simpler proof that does not rely on the duality theory and uses only Hall's theorem. 

\begin{thm}[Kunysz, \cite{kunysz2018algorithm}]\label{thm:ssm}
Let $G = (V,E)$ be a stable matching problem with ties where the graph $G$ is bipartite, then the strongly stable matching polytope $SSM(G)$ is described by the following linear system:
\begin{subequations}
\begin{align}
    & \sum_{u \in N(v)} x_{u,v} \leq 1, & & \forall v \in V, \label{eq:ssm-a}\\
    & \sum_{i >_u v} x_{u,i} + \sum_{j >_v u} x_{j,v} + \sum_{k =_u v} x_{u,k} \geq 1, & & \forall (u,v) \in E, \label{eq:ssm-b}\\
    & \sum_{i >_u v} x_{u,i} + \sum_{j >_v u} x_{j,v} + \sum_{k =_v u} x_{k,v} \geq 1, & & \forall (u,v) \in E, \label{eq:ssm-c}\\
    & x_{u,v} \geq 0, & & \forall (u,v) \in E \label{eq:ssm-d}
\end{align}
\end{subequations}
where $N(v)$ denotes the set of neighbors of $v$ in $G$, and $w>_u v$ means $u$ prefers $w$ to $v$.
\end{thm}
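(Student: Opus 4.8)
The plan is to imitate the proof of Theorem~\ref{thm:susm}, replacing its appeal to linear-programming duality (and to the ``unique maximal element'' phenomenon) by Hall's marriage theorem. As there, the target is to show that every vertex of the polytope cut out by (\ref{eq:ssm-a})--(\ref{eq:ssm-d}) is a $0/1$ vector; this finishes the proof, since for a $0/1$ vector satisfying (\ref{eq:ssm-a}) the two inequalities (\ref{eq:ssm-b}) and (\ref{eq:ssm-c}) say exactly that neither of the two strong-blocking conditions is met by any non-matching edge, so such a vector is the incidence vector of a strongly stable matching, and conversely every strongly stable matching is feasible. So let $x$ be feasible, put $E^+ = \{(u,v) : x_{u,v} > 0\}$, let $V^+$ be the set of covered vertices, and for $u \in V^+$ let $N^*(u)$ be the set of $\succeq_u$-maximal vertices of $\{i : x_{u,i} > 0\}$.

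The first step is the analogue of Lemma~\ref{lem:susm-equal}: if $v \in N^*(u)$ then $\sum_{p \in N(v)} x_{p,v} = 1$ and $u$ is a $\preceq_v$-minimal element of $\{p : x_{p,v} > 0\}$. This is again the algebraic squeeze of (\ref{eq:super-equal}), but now one inequality does not suffice: (\ref{eq:ssm-b}) puts the equality-term on $u$'s list while (\ref{eq:ssm-c}) puts it on $v$'s list (unlike (\ref{eq:susm-b}), which is symmetric in its two endpoints), so one uses (\ref{eq:ssm-c}) when $v$ is a woman and (\ref{eq:ssm-b}) when $v$ is a man. The conclusion to keep is: every vertex that is a top choice of one of its neighbours is fully matched, and it then sees that neighbour as one of its own bottom choices.

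The combinatorial core is the place where Theorem~\ref{thm:susm} deduced $|N^*(u)| = 1$ and read off a bijection between the two sides of $V^+$; with ties $N^*(u)$ can contain several mutually tied elements, so instead I would feed the top-choice relation into Hall's theorem. Concretely, build the bipartite graph $H$ on $(U \cap V^+) \cup (W \cap V^+)$ with an edge $(u,w)$ whenever $x_{u,w} > 0$ and $w \in N^*(u)$, and argue that $H$ has a matching saturating $U \cap V^+$ (and, symmetrically, one saturating $W \cap V^+$), producing two matchings $M$ and $N$ supported on $E^+$, in one of which every man is matched to a top choice and every woman (by the lemma) to a bottom choice, and in the other the roles are exchanged. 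With $M$ and $N$ in hand the remainder is routine bookkeeping exactly as in Theorem~\ref{thm:susm}: perturbing $x$ by $\varepsilon(\chi^M - \chi^N)$ preserves all the $\le 1$ constraints (both matchings are perfect on $V^+$) and all the tight $\ge 1$ constraints (by the top/bottom-choice property), so an extreme $x$ forces $M = N$; a final perturbation along any cycle of $E^+$ (all of whose vertices are fully matched, so the degree constraints and the tight inequalities are untouched) then forces $E^+$ to be cycle-free, whence, using again that every vertex is fully matched, $x \in \{0,1\}^E$.

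I expect the genuine difficulty to be the Hall-condition step: showing $\big|\bigcup_{u \in S} N^*(u)\big| \ge |S|$ for every $S \subseteq U \cap V^+$, together with the accompanying facts that $|U \cap V^+| = |W \cap V^+|$ and that every vertex of $V^+$ is fully matched and is a top choice of some neighbour (so that $H$ genuinely has a perfect matching covering all of $V^+$). In the super-stable case all of this was a free by-product of the bijection; here it has to be wrung out of the ``fully matched'' half of the lemma by a counting argument in the spirit of the chain $|F_W(x)| = \sum_w \sum_j x_{j,w} = \sum_j \sum_w x_{j,w} \le |U \cap V^+|$ from Theorem~\ref{thm:susm}. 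An appealing alternative I would also pursue is to avoid stating Hall's condition at all: if it fails, a deficient set of men should itself supply a direction along which $x$ can be perturbed without leaving the polytope, so that in either case an extreme point is integral.
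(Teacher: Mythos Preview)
Your outline matches the paper's proof almost exactly: the same lemma, Hall's theorem to extract $M$ and $N$, the perturbation $x + \varepsilon(\chi^M - \chi^N)$, and a final reduction to the bipartite perfect-matching polytope. The one place your plan is genuinely incomplete is precisely the step you flag as the difficulty, and the tool you propose for it is not the one that works.

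You state the analogue of Lemma~\ref{lem:susm-equal} as ``$v$ is fully matched and $u$ is $\preceq_v$-minimal'', obtained by squeezing \emph{one} of (\ref{eq:ssm-b}),~(\ref{eq:ssm-c}) against (\ref{eq:ssm-a}). The paper squeezes (\ref{eq:ssm-c}) to get exactly that, and then feeds the resulting equalities back into (\ref{eq:ssm-b}) to harvest the additional conclusion
\[
\sum_{k =_u v} x_{u,k} \;\ge\; \sum_{k =_v u} x_{k,v}\qquad\text{whenever } v\in N^*(u).
\]
This inequality, not the ``fully matched'' fact alone, is what drives the Hall verification. If Hall fails on $H$, take a critical set $S\subseteq U\cap V^+$, saturate its $H$-neighbourhood $T$ by a matching $M'$ inside $H$, and sum the displayed inequality over the pairs $(u,M'(u))$ with $u\in M'(T)$. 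Since any two men of $S$ sharing a top choice $v$ are, by your own lemma, both $\preceq_v$-minimal and hence tied at $v$, the right-hand side dominates $\sum_{u\in S}\sum_{k\in N^*(u)} x_{u,k}$, while the left-hand side is that same sum taken only over $M'(T)\subsetneq S$ --- a contradiction. The double-counting chain you borrow from Theorem~\ref{thm:susm} uses only $\sum_{j} x_{j,v}=1$ and does not close this gap by itself, because a man may carry most of his mass strictly below his top tie-class.

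One smaller point: after $M=N$ you pass to ``perturb along any cycle of $E^+$''. That is correct, but not for the reason you give (``all vertices fully matched''); what you actually need is that $M=N$ forces $N^*(u)=N_*(u)$ for every $u\in V^+$, so all positive edges at $u$ sit in a single tie-class and the left-hand sides of (\ref{eq:ssm-b})--(\ref{eq:ssm-c}) are invariant under the cycle perturbation. The paper phrases this last step as ``$x$ is then an extreme point of the perfect-matching polytope of $(V^+,E^+)$'', which is the same observation.
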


We give an alternative proof that does not rely on the duality theory of linear programming.

\begin{proof}
It is easy to verify that the incidence vector of any strongly stable matching satisfies constraints (\ref{eq:ssm-a})-(\ref{eq:ssm-d}). We need to prove each extreme point of the polytope defined by (\ref{eq:ssm-a})-(\ref{eq:ssm-d}) is integral. 

Let $x$ be a feasible solution. Define $E^+$ to be the set of edges $(u,v)$ with $x_{u,v} > 0$, and $V^+$ the set of vertices covered by $E^+$. For each $u \in V^+$, let $N^*(u)$ be the set of maximal elements in $\{i : x_{u,i} > 0\}$. Note that there might be multiple maximal elements that form a tie. Similarly, for each $u \in V^+$, let $N_*(u)$ be the set of minimal elements in $\{i : x_{u,i} > 0\}$.

We first show the following lemma.
\begin{lem}\label{lem:ssm-equal}
For each vertex $u$ and each vertex $v \in N^*(u)$, then $u \in N_*(v)$ and that $\sum_{j \in N(v)} x_{j,v} = 1$, and $\sum_{k=_u v}x_{u,k} \geq \sum_{k=_v u} x_{k,v}$.
\end{lem}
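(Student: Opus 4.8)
The plan is to mirror the proof of Lemma~\ref{lem:susm-equal}, but now exploiting \emph{both} inequalities (\ref{eq:ssm-b}) and (\ref{eq:ssm-c}) rather than the single constraint (\ref{eq:susm-b}), and to extract the extra tie-mass inequality by playing the two against each other.

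First I would record the trivial observation that $v \in N^*(u)$ means $v$ is a maximal element of $\{i : x_{u,i} > 0\}$, so $\sum_{i >_u v} x_{u,i} = 0$, and also $x_{u,v} > 0$. Substituting $\sum_{i >_u v} x_{u,i} = 0$ into constraint (\ref{eq:ssm-c}) for the edge $(u,v)$ gives $\sum_{j >_v u} x_{j,v} + \sum_{k =_v u} x_{k,v} \geq 1$. On the other hand, since $N(v)$ partitions into $\{j >_v u\}$, $\{j =_v u\}$ and $\{j <_v u\}$, we have $\sum_{j >_v u} x_{j,v} + \sum_{k =_v u} x_{k,v} = \sum_{j \in N(v)} x_{j,v} - \sum_{j <_v u} x_{j,v} \leq 1 - \sum_{j <_v u} x_{j,v} \leq 1$ by (\ref{eq:ssm-a}). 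Squeezing forces equality throughout, hence $\sum_{j <_v u} x_{j,v} = 0$ and $\sum_{j \in N(v)} x_{j,v} = 1$. Since $x_{u,v} > 0$ while $x_{j,v} = 0$ for every $j <_v u$, the vertex $u$ is a minimal element of $\{j : x_{j,v} > 0\}$, i.e. $u \in N_*(v)$ — note that, unlike the super-stable case, tie members $j =_v u$ may carry positive weight, so $u$ need not be the \emph{unique} minimal element.

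For the last inequality I would feed the same edge $(u,v)$ into constraint (\ref{eq:ssm-b}): using $\sum_{i >_u v} x_{u,i} = 0$ again, it reads $\sum_{j >_v u} x_{j,v} + \sum_{k =_u v} x_{u,k} \geq 1$. From the equality established in the previous step, $\sum_{j >_v u} x_{j,v} = 1 - \sum_{k =_v u} x_{k,v}$; substituting this in and rearranging gives $\sum_{k =_u v} x_{u,k} \geq \sum_{k =_v u} x_{k,v}$, which is exactly the claimed inequality.

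There is essentially no obstacle to the lemma itself — it falls out of (\ref{eq:ssm-a}), (\ref{eq:ssm-b}) and (\ref{eq:ssm-c}) by the same tight-constraint argument used for super-stability. The point worth flagging is where it bites later: strong stability only pins $u$ down as \emph{a} minimal element of column $v$, not the unique one, so the subsequent bijection/counting step cannot be run verbatim as in Theorem~\ref{thm:susm}; the companion inequality $\sum_{k =_u v} x_{u,k} \geq \sum_{k =_v u} x_{k,v}$ is the bookkeeping device that lets one still balance the total weights over the two colour classes and invoke Hall's theorem. I expect the real work of the section to be that Hall-type argument, not this lemma.
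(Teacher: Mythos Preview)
Your proof is correct and follows essentially the same approach as the paper: both use $\sum_{i>_u v} x_{u,i}=0$ in constraint~(\ref{eq:ssm-c}) to run the squeeze against~(\ref{eq:ssm-a}), obtaining $\sum_{j<_v u} x_{j,v}=0$ and $\sum_{j\in N(v)} x_{j,v}=1$, and then plug the resulting identity $\sum_{j>_v u} x_{j,v}=1-\sum_{k=_v u} x_{k,v}$ into~(\ref{eq:ssm-b}) to extract the tie-mass inequality. Your remarks about non-uniqueness of the minimal element and the role of the inequality in the subsequent Hall argument are accurate and match the paper's use of the lemma.
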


\begin{proof}
Indeed, (\ref{eq:ssm-c}) implies
\begin{equation}\label{eq:strong-equal}
    1 \leq \sum_{j >_v u} x_{j,v} + \sum_{k=_v u}x_{k,v} \leq \sum_{j \in N(v)} x_{j,v} - \sum_{j <_v u} x_{j,v} \leq 1 - \sum_{j <_v u} x_{j,v} \leq 1
\end{equation}
Hence we have equality throughout in (\ref{eq:strong-equal}). This implies that $x_{j,v} = 0$ for each $\{j : j <_v u\}$ and that $\sum_{j \in N(v)} x_{j,v} = 1$. Hence, $u \in N_*(v)$. (\ref{eq:ssm-b}) implies
\begin{equation*}
    1 \leq \sum_{j >_v u} x_{j,v} + \sum_{k=_u v}x_{u,k} \leq 1 - \sum_{k=_v u} x_{k,v} + \sum_{k=_u v}x_{u,k}
\end{equation*}
Hence, $\sum_{k=_u v}x_{u,k} \geq \sum_{k=_v u} x_{k,v}$.
\end{proof}

Now let $U$ and $W$ be the color classes of $G$. Let $E^*(x)$ be the set of edges $\{(u,v) : u \in U \cap V^+, v \in N^*(u)\}$ and $F_W(x)$ be the set of vertices in $W$ and covered by $E^*(x)$. We will show that the subgraph induced by $E^*(x)$ contains a perfect matching $M$.

Suppose not, by Hall's theorem, let $S$ be the unique critical subset of $U \cap V^+$. A subset of $X$ is critical if it is maximally deficient and contains no maximally deficient proper subset. Hence, we have $|N(S)| < |S|$ and there is a matching $M'$ saturating for $N(S)$ in the subgraph induced by $S \cup N(S)$. Fixing the matching $M'$, let $S'$ be the set of vertices in $S$ that is matched in $M'$, so $S' \subsetneq S$. By Lemma \ref{lem:ssm-equal}, we have 
\begin{align}
    \sum_{u \in S'} \sum_{k \in N^*(u)} x_{u,k} &\geq \sum_{v \in N(S)} \sum_{k =_v M'(v)} x_{k,v}\\
    &\geq \sum_{v \in N(S)} \sum_{k \in S} x_{k,v}\\
    &= \sum_{u \in S} \sum_{k \in N^*(u)} x_{u,k}
\end{align}
The first equality follows from Lemma \ref{lem:ssm-equal}. The second inequality follows from that each vertex $v \in N(S)$ is indifferent with all neighbors. The third equality follows by double counting. Hence, the vertices in $S\backslash S'$ are isolated, contradiction. So we also have $|U \cap V^+| = |F_W(x)|$. Again by Lemma \ref{lem:ssm-equal}, we have 

\begin{equation}\label{eq:ssm-equal2}
    \sum_{u \in U \cap V^+} \sum_{k=_u M(u)} x_{u,k} \geq \sum_{v \in F_W(x)} \sum_{k=_v M(v)} x_{k,v} \geq  \sum_{u \in U \cap V^+} \sum_{k=_u M(u)} x_{u,k}
\end{equation}
Hence, we have equality throughout (\ref{eq:ssm-equal2}). So for each $v \in F_W(x)$ and each $k \in N_*(v)$, $v = N^*(k)$. 

Similarly, let $E_*$ be the set of edges $\{(u,v) : v \in W\cap V^+, u \in N^*(v)\}$ and $F_U(x)$ be the set of vertices in $U$ and covered by $E^+$. The subgraph induced by $E_*$ contains a perfect matching $N$. Hence, $|W\cap V^+| = |F_U(x)|$. So $|U\cap V^+| = |W\cap V^+| = |F_U(x)| = |F_W(x)|$. It follows that $E_*$ is exactly the set of edges $\{(u,v) : u \in U \cap V^+, v \in N_*(u)\}$.

Consider the vector $x' = x + \varepsilon\chi^M - \varepsilon\chi^N$, with $\varepsilon$ close enough to $0$ (positive or negative). we will show that $x'$ is also feasible solution of (\ref{eq:ssm-a})-(\ref{eq:ssm-d}). It is easy to see that $x'$ satisfies (\ref{eq:ssm-a}) and (\ref{eq:ssm-d}). For each vertex $u \in U \cap V^+$, there is one edge in $M$ and one edge in $N$ incident to it, implying $\sum_{i \in N(u)} x'_{u,i} = \sum_{i \in N(u)} x_{u,i} \leq 1$. To see that $x'$ satisfies (\ref{eq:ssm-b}), let $(u,v)$ be an edge in $E^+$ attaining equality in (\ref{eq:ssm-b}). First we know that there is one edge in $M$ incident to $u$ such that $M(u) \geq_u v$. Also there is no $M$ edge incident to $v$ and $M(v) >_v u$ since $M(v) \in N_*(v)$. Similarly, there is one edge in $N$ incident to $v$ such that $M(v) \geq_v u$. Then we have two cases. If there is one edge in $N$ incident to $v$ such that $N(v) >_v u$, then there is no edge in $N$ incident to $u$ such that $M(u) \geq_u v$, because otherwise, $u \in N^*(v)$, which contradicts with $N(v) >_v u$. The other case is that there is one edge in $N$ incident to $v$ such that $N(v) =_v u$, then there must be an edge in $N$ incident to $u$ such that $N(u) =_u v$. Hence, $\sum_{i >_u v} x'_{u,i} + \sum_{j >_v u} x'_{j,v} + \sum_{k =_u v} x'_{u,k} = \sum_{i >_u v} x_{u,i} + \sum_{j >_v u} x_{j,v} + \sum_{k =_u v} x_{u,k} = 1$. Similarly, we can prove $x'$ satisfies (\ref{eq:ssm-c}). Let $x$ be an extreme point. The feasibility of $x'$ implies that $\chi^M = \chi^N$, that is, $M = N$. So for each $v \in V^+$, we have $N^*(v) = N_*(v)$ and $\sum_{k \in N^*(v)} x_{k,v} = 1$. Then $x$ is an extreme point of the perfect matching polytope of $(V^+, E^+)$. $x$ must be a perfect matching in $(V^+, E^+)$.
\end{proof}

\bibliographystyle{IEEEtran}
\bibliography{references}

\end{document}